\newcommand{\executeiffilenewer}[3]{%
\ifnum\pdfstrcmp{\pdffilemoddate{#1}}%
{\pdffilemoddate{#2}}>0%
{\immediate\write18{#3}}\fi%
}
\algrenewcommand\algorithmicdo{}
\newtheorem{theorem}{Theorem}
\newtheorem{proposition}{Proposition}
\newtheorem{lemma}{Lemma}
\newtheorem{hyp}{Hypothesis}
\theoremstyle{definition}
\newtheorem{remark}{Remark}
\newtheoremstyle{assume}
  {3pt}
  {3pt}
  {}
  {}
  {\bf}
  {}
  { }
  {\thmname{#1}.\thmnumber{#2}\thmnote{ \textnormal{(\textit{#3})}}}
\theoremstyle{assume}
\def\bblam{\boldsymbol{\lambda}}
\def\nnil{\nil}
\newcounter{prob}
\newenvironment{prob}[1][\nil]{%
	\def\tmp{#1}
	\equation
	\ifx\tmp\nnil
		\refstepcounter{prob}
		\tag{P\Roman{prob}}
	\else
		\tag{\tmp}
	\fi
	\aligned%
}{%
	\endaligned\endequation%
}
\title{Federated Classification using Parsimonious Functions in Reproducing Kernel Hilbert Spaces}
\author{Maria Peifer and Alejandro~Ribeiro%
\thanks{Department of Electrical and Systems Engineering, University of Pennsylvania.
e-mail: \mbox{\texttt{mariaop@seas.upenn.edu}}~(contact author), \mbox{\texttt{aribeiro@seas.upenn.edu}}.
}
}
\renewcommand {\st} {\text{\,s.t.}}
\begin{document}
\maketitle
\begin{abstract}
Federated learning forms a global model using data collected from a federation agent. This type of learning has two main challenges: the agents generally don't collect data over the same distribution, and the agents have limited capabilities of storing and transmitting data. Therefore, it is impractical for each agent to send the entire data over the network. Instead, each agent must form a local model and decide what information is fundamental to the learning problem, which will be sent to a central unit. The central unit can then form the global model using only the information received from the agents. We propose a method that tackles these challenges. First each agent forms a local model using a low complexity reproducing kernel Hilbert space representation. From the model the agents identify the fundamental samples which are sent to the central unit. The fundamental samples are obtained by solving the dual problem. The central unit then forms the global model. We show that the solution of the federated learner converges to that of the centralized learner asymptotically as the sample size increases. The performance of the proposed algorithm is evaluated using experiments with both simulated data and real data sets from an activity recognition task, for which the data is collected from a wearable device. The experimentation results show that  the accuracy of our method converges to that of a centralized learner with increasing sample size. 
\end{abstract}
\begin{IEEEkeywords}
federated learning, reproducing kernel Hilbert space (RKHS)
\end{IEEEkeywords}

\section{Introduction}\label{S:intro}	
In federated learning a global model is trained by a central server from a federation of agents \cite{konevcny2016fed,konevcny2016federated,li2020federated, mcmahan2016communication}. Different from traditional learning, in which the learner has access to the entire data set, in federated learning each agent collects its own data. A naive method for solving this problem involves having all agents send their data over the network to the central server. The serve then computes a global model using traditional machine learning methods. This is a feasible method when the number of agents and the size of the data collected is small. However, it becomes increasingly prohibitive as the agents collect more data and it does not take advantage of the computational capabilities of the agents.

The need for federated learning algorithms naturally arose from distributed networks generating a vast amount of data such as mobile phones, wearable devices, and autonomous vehicles \cite{smith2017federated, anguita2013public}. These devices are not only capable of collecting large amounts of data but also have great computational capabilities which makes them an essential part of the learning process. Applications of federated learning include learning sentiment, semantic location, or activities of mobile phone users; predicting health events like low blood sugar or heart attack risk from wearable devices; or detecting burglaries within smart homes \cite{smith2017federated,pantelopoulos2009survey,rashidi2009keeping,anguita2013public,yang2019federated}.

Federated learning presents two major challenges: statistical challenge and system challenge \cite{konevcny2015federated, konevcny2016federated,smith2017federated}. The statistical challenge arises from the fact that an agent is tied generally tied to a user, \emph{e.g.} mobile device user, and while there exists similarities between users, each individual is different and therefore the distribution of the data collected by the agents is different. Indeed, the data collected by the agents is non i.i.d., although in some cases the organization of the agents can provide information about the relationship between the distributions \cite{zhao2018federated}. The system challenge is dictated by the capabilities of the system to store data and to transmit data over the network efficiently \cite{bonawitz2019towards, konevcny2016federated, smith2017federated}. Devices collect vast amounts of data and therefore transmitting the data over the network is prohibitive. Consequently, a traditional learning approach with a central unit learning the global model is often impossible, and it is imperative for agents to transmit information about the problem without sending their entire data.

The task of learning from a federation of agents has been approached several ways. The  algorithm \emph{FederatedAveraging (FedAvg)} was proposed  in \cite{mcmahan2017federated, hard2018federated} which computed a weighted average of the models from each agent based on the number of training samples at each update step. An equivalent algorithm for \emph{FedAvg} was proposed in \cite{mcmahan2016communication} which used a weighted average of the gradient to update the global parameters. The \emph{FedAvg} was shown to converge in non i.i.d. settings when used with a diminishing learning rate  for strongly convex and smooth problems \cite{li2019convergence}. A modified \emph{FedAvg} algorithm was later proposed which only performs global updates at a set interval \cite{wang2019adaptive,yu2019parallel}, which was shown to converge near the global optimum. An adaptation to federated learning of the stochastic variance reduced gradient descent was used in \cite{konevcny2016fed} in order to compute the global gradient. In an effort to reduce the communication load even further and increase privacy a stochastic gradient descent variation was used in \cite{shokri2015privacy}, which only transmits a subset of the gradient to the server. The methods presented so far attempt to address the communication challenge and are mostly tested on independent identically distributed (i.i.d.) data sets. Zhao \emph{et.al} \cite{zhao2018federated} attempts to tackle the statistical challenge of non i.i.d data by sharing a small subset of data over the network. 

In this work we address the challenges of federated learning with an algorithm using low complexity reproducing kernel Hilbert space (RKHS) representations. Each agent locally learn a low complexity model and sends only the critical samples to the learning problem over the network to the central server. RKHS methods are non-parametric techniques used in signal processing, statistics and machine learning \cite{rosipal2001kernel, bishop2006pattern, yuan2010reproducing, berlinet2011reproducing, arenas2013kernel, koppel2019parsimonious}. Their success is due to the richness of these spaces, which allows them to model a large class of functions. Moreover, the functions found in RKHS methods can be represented as a possibly infinite linear combination of functions called reproducing kernels \cite{bishop2006pattern,berlinet2011reproducing,peifer2020sparse}. The integral representation of RKHS methods  presented in \cite{peifer2020sparse} is used for this federated classification because it both learns a low complexity representation and detects the critical samples to the learning problem. Indeed, the integral representation allows for a modified $L_0$-norm for functions which measures the support of the function. This $L_0$-norm allows the method to find a low complexity representation of a function in RKHS. Additionally, solving the problem in the dual domain, provides insight about the critical samples to the problem through the dual variable. Our federated classification model requires each agent to train a local model and find the optimal dual variable. The samples for which the dual variable is greater than zero are considered critical to the classification problem and are sent to the central server along with the optimal dual variable. The central server then computes a global model using the samples provided by the agents.

The paper is structured as follows. In section \ref{sec:learning_RKHS} we present a method for learning low complexity representations in RKHSs. Section \ref{sec:prob_form} introduces the problem of federated learning and defines the relationship between the agents. In section \ref{sec:Algorithm} we describe the algorithm for the federated classification problem. In section \ref{sec:optimality} we show that the solution of the federated learner converges to that of the centralized learner as the sample size grows. Lastly, in section \ref{sec:Num} we evaluate our method on both simulated data and a classification task using data obtained from wearable devices.

\section{Learning Low complexity RKHS Representations}
\label{sec:learning_RKHS}
We study classification problems in which we are given a training set made up of $N$ feature-class pairs of the form $(\bbx_n,y_n) \in \ccalT$. Features $\bbx_n \in \calX\subset \reals^p$ are real valued $p$-dimensional vectors and classes $y_n\in\{-1, +1\}$ are binary. We want to learn a function approximation $f(x)$ such that $f(x_n)$ coincides with $y_n$ to the extent possible. We formulate this mathematically by introducing the loss function $\ell(f(x), y) = 1-\epsilon - y f(x)$, a class function $\ccalC$ and a function complexity measure $\rho(f)$ to define the optimization problem

\begin{prob}\label{eqn_og}
 P=  &\min_{f\in\ccalC} 
        && \rho(f) \\
   &\st
       &&  \frac{1}{N}\ell\Big(f(\bbx_n), y_n\Big) \leq 0, \quad
                (\bbx_n,y_n) \in \ccalT .
\end{prob}
The constraints in \eqref{eqn_og} force the function $f$ to satisfy $f(\bbx_n)\geq 1-\eps$ when $y_n=+1$ and $f(\bbx_n)\leq -(1-\eps)$ when $y_n=-1$. The class function $\ccalC$ and the complexity measure $\rho(f)$ constrain the variability of $f$ and dictate how it generalizes to unobserved samples $\bbx$. In this paper we adopt the use of low complexity kernel representations as introduced in \cite{peifer2020sparse}.

To formalize low complexity RKHS representations, let $k(\bbx, \bbs; w)$ be a \emph{family} of kernel functions in which $\bbx\in\reals^p$ is a variable, $\bbs\in\reals^p$ is a kernel center and $w\in\reals$ is a kernel parameter. Further consider a compact set of possible kernel parameters $\ccalW\subset\reals$, and a compact set of possible kernel centers $\ccalS\subseteq\reals^p$. The function class $\ccalC$ is defined as 
\begin{equation}\label{eqn_integral_rep}
   \ccalC  
      = \bigg\{f \,:\, 
            f(\bbx) = \int_{\ccalS \times \ccalW} 
                         \alpha(\bbs,w) 
                             k(\bbx,\bbs \,;\, w) \, d\bbs dw 
        \bigg\},
\end{equation}
where $\alpha: \ccalS \times \ccalW \rightarrow \reals$ is a coefficient function in~$L_2(\ccalS \times \ccalW)$. Observe that in \eqref{eqn_integral_rep} the kernel family $k(\bbx,\bbs \,;\, w)$  is given and the coefficient function $\alpha(\bbs,w)$ is a variable we want to find. We leverage this observation to measure the complexity of the function $f\in\ccalC$ through the complexity of $\alpha$. We begin by defining the sparsity of the coefficient function as its support, which is given by
\begin{equation}\label{eqn_L0}
   \Vert \alpha \Vert_{L_0} = \int_{\ccalS \times \ccalW}
		\indicator \left[ \alpha(\bbs, w) \neq 0 \right] d\bbs dw.
\end{equation} 
Further consider the $L_2$ norm  $\Vert \alpha \Vert_{L_2} = \int_{\ccalS \times \ccalW}  \alpha^2(\bbs, w)  d\bbs dw$ of the coefficient function so as to measure the complexity of $f$ according to the elastic net measure 
\begin{align}\label{eqn_elastic_net}
   \rho(f) 
      &= \frac{1}{2} \Vert \alpha \Vert_{L_2}  + 
              \gamma \Vert \alpha \Vert_{L_0}  \\ \nonumber
      &= \int_{\ccalS \times \ccalW}
          \frac{1}{2} \alpha^2(\bbs, w) +
		           \gamma \indicator \left[ \alpha(\bbs, w) \neq 0 \right] \,d\bbs dw.
\end{align} 
The variable $\gamma$ represents the regularizing parameter between the two norms.
In principal, we want $\gamma$ as large as possible to favor sparsity. In practice, however, we want to reduce the value to benefit from the regularizing effect (see Remark \ref{rem_gamma}). 
The low complexity RKHS classification problem is defined as \eqref{eqn_og} with  the class function $\ccalC$ given by \eqref{eqn_integral_rep} and the function complexity measure given by \eqref{eqn_elastic_net}. This is a problem that we can rewrite as an optimization over the coefficient function,
\begin{prob}[PC]\label{eqn_the_central_problem}
P = &\min_{\alpha}
	   &\int_{ \ccalS \times \ccalW}
          \frac{1}{2} \alpha^2(\bbs, w) +
		           \gamma \indicator \left[ \alpha(\bbs, w) \neq 0 \right] \,d\bbs dw \\
   &\st &  \frac{1}{N}\ell\Big(f(\bbx_n), y_n\Big) \leq 0, \quad
                (\bbx_n,y_n) \in \ccalT  \\
   &    &  f(\bbx) = \int_{\ccalS \times \ccalW} 
                         \alpha(\bbs,w) 
                             k(\bbx,\bbs \,;\, w) \, d\bbs dw, \\
\end{prob}
The problem \eqref{eqn_the_central_problem} differs from \eqref{eqn_og} in that it replaces the search for the function $f$ by a search for the coefficient $\alpha$. The problems are otherwise equivalent -- with function class $\ccalC$ as per  \eqref{eqn_integral_rep} and complexity measure $\rho$ as per \eqref{eqn_elastic_net} -- in the sense that both attain the same optimal objective $P$ and we can recover the optimal function $f^*$ from the the optimal coefficient $\alpha^*$ by evaluating $f(\bbx)$ according to \eqref{eqn_integral_rep}. 

The constraints in \eqref{eqn_the_central_problem} specify the form of the function $f$ in terms of the coefficient function $\alpha$ and force the constraints $\ell(f(\bbx_n), y_n) \leq 0$ to be satisfied for all entries of the training set. Out of all the coefficient functions that satisfy these constraints we search for the $\alpha^*$ with the lowest elastic net cost. This is expected to be a sparse coefficient function. We are therefore searching for a function $f$ that can be specified by as few kernels as possible while still passing within $\eps$ of all the elements of the training set. When we search for kernels to add to the representation, the search is over kernel centers $\bbs\in\ccalS$ and kernel parameters $\bbw\in\ccalW$. The latter allows, e.g., a search over kernel widths -- see \cite{peifer2020sparse} for details.

The problem \eqref{eqn_the_central_problem} is infinite dimensional and non-convex, however, it can be solved in the dual domain by leveraging the fact that it has zero duality gap \cite{chamon2018strong,peifer2020sparse}. 

\begin{theorem}\label{th:zero_duality}
The problem \eqref{eqn_the_central_problem} has strong duality if the following conditions are met: (i) the kernel $k(\cdot,\bbs;w)$ has no point masses and (ii) Slater's condition is met.
\end{theorem}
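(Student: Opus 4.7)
The plan is to prove strong duality via convexity of the perturbation function. Define $P(\mathbf{u}) := \min_\alpha \{\rho(f_\alpha) : \tfrac{1}{N}\ell(f_\alpha(\bbx_n), y_n) \leq u_n, \, n=1,\dots,N\}$, with $f_\alpha$ obtained from $\alpha$ through \eqref{eqn_integral_rep}, so that $P = P(\mathbf{0})$. A standard argument in convex analysis shows that if $P(\cdot)$ is convex and $\mathbf{0}$ lies in the interior of its effective domain, then $\mathrm{epi}(P)$ admits a non-vertical supporting hyperplane at $(\mathbf{0}, P(\mathbf{0}))$, whose normal $\bblam^{\star} \succeq \mathbf{0}$ satisfies $d(\bblam^{\star}) = P$; together with weak duality $D \leq P$, this yields $D = P$. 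Slater's condition (ii) furnishes the required interior point, so the question reduces to proving that $P(\cdot)$ is convex.

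The difficulty is that $\rho$ contains the non-convex $L_0$ term, so convexity of $P$ cannot be read off the structure of \eqref{eqn_the_central_problem} directly. I would proceed via Lyapunov's convexity theorem. For $\mathbf{u}_1, \mathbf{u}_2 \in \mathrm{dom}(P)$ with near-optimizers $\alpha_1, \alpha_2$ and $\theta \in (0,1)$, consider, for each $i \in \{1,2\}$, the $\reals^{N+1}$-valued vector measure on Borel subsets $E \subseteq \ccalS \times \ccalW$,
\begin{equation*}
\mu_i(E) = \Biggl( \Bigl( \int_E \alpha_i(\bbs,w)\, k(\bbx_n,\bbs;w)\, d\bbs\, dw \Bigr)_{n=1}^{N},\ \int_E \Bigl[ \tfrac{1}{2}\alpha_i^2(\bbs,w) + \gamma \indicator[\alpha_i(\bbs,w) \neq 0] \Bigr] d\bbs\, dw \Biggr).
\end{equation*}
Hypothesis (i) (no point masses) ensures that each $\mu_i$ is non-atomic, so by Lyapunov's theorem its range is a convex subset of $\reals^{N+1}$. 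Since both $\mathbf{0}$ and $\mu_i(\ccalS \times \ccalW)$ lie in this range, there exist Borel sets $A_1, A_2 \subseteq \ccalS \times \ccalW$ with $\mu_1(A_1) = \theta\,\mu_1(\ccalS \times \ccalW)$ and $\mu_2(A_2) = (1-\theta)\,\mu_2(\ccalS \times \ccalW)$, and a further application of Lyapunov lets us take $A_1 \cap A_2 = \emptyset$. Setting $\alpha_\theta := \alpha_1 \indicator_{A_1} + \alpha_2 \indicator_{A_2}$ then gives, by linearity of the constraint integrals and additivity of the cost integrals over disjoint supports, constraint values $\theta \mathbf{u}_1 + (1-\theta)\mathbf{u}_2$ and total cost $\rho(\alpha_\theta) \leq \theta\rho(\alpha_1) + (1-\theta)\rho(\alpha_2)$, which delivers $P(\theta \mathbf{u}_1 + (1-\theta)\mathbf{u}_2) \leq \theta P(\mathbf{u}_1) + (1-\theta)P(\mathbf{u}_2)$.

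The step I expect to be most delicate is the treatment of the $L_0$ piece: unlike the $L_2$ term, $\|\alpha\|_{L_0}$ is homogeneous of degree zero, so the convex-combination bound on $\rho(\alpha_\theta)$ relies critically on the supports of $\alpha_1 \indicator_{A_1}$ and $\alpha_2 \indicator_{A_2}$ being disjoint and on the $L_0$ contribution of $\alpha_i$ over $A_i$ being exactly the fraction $\theta$ or $1-\theta$ of $\|\alpha_i\|_{L_0}$ --- both of which are delivered only because the vector measure $\mu_i$ is non-atomic. Without assumption (i) one could not in general partition $\ccalS \times \ccalW$ so that all $N+1$ integrals rescale simultaneously in the desired proportion, and the argument would fail at precisely this point; this is also why the proof ultimately leans on the companion results of \cite{chamon2018strong,peifer2020sparse} to certify that the non-atomicity assumption really does suffice.
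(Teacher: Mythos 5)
The paper does not prove this theorem itself but defers to \cite{peifer2020sparse} (and \cite{chamon2018strong}), where strong duality is established by essentially the route you take: convexity of the perturbation function via Lyapunov's theorem applied to the non-atomic vector measure formed by the constraint and cost integrals (non-atomicity being exactly what condition (i) guarantees), with Slater's condition supplying the interior point that yields a non-vertical supporting hyperplane. Your reconstruction is correct in substance; the one step worth tightening is the disjointness of $A_1$ and $A_2$, which is obtained cleanly by applying Lyapunov once to the stacked $\reals^{2(N+1)}$-valued measure $(\mu_1,\mu_2)$ to get a single set $A$ with $(\mu_1,\mu_2)(A)=\theta\,(\mu_1,\mu_2)(\ccalS\times\ccalW)$ and then setting $A_1=A$ and $A_2=(\ccalS\times\ccalW)\setminus A$, rather than via two separate applications.
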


A formal proof for this theorem can be found in \cite{peifer2020sparse}. 

\begin{remark}\normalfont
Problem \eqref{eqn_the_central_problem} fits the classification function by using constraints as opposed to a regularized minimization problem. The advantage of using a constraint problem is two-fold: it allows for the problem to be solved in the dual domain and the solution of the dual gives us information about the critical samples to our learning problem. This concept is explained in more detail in sections \ref{sec:Algorithm} and \ref{sec:optimality}.
\end{remark}

\begin{remark}\normalfont
The function class $\ccalC$ is not an RKHS but is closely related. For a fixed kernel parameter $w$ in \eqref{eqn_integral_rep} the expression $f(\bbx) = \int_{\ccalS} \alpha(\bbs,w) k(\bbx,\bbs; w) d\bbs$ is an integral representation of the RKHS generated by the kernel $k(\bbx,\bbs; w)$ \cite{peifer2018locally, peifer2019sparse, peifer2020sparse}. The use of this integral representation as opposed to the more traditional series representation $f(\bbx) = \sum_{j=1}^J a_j k(\bbx, \bbs_j;w)$ may seem an unnecessary complication but it is actually a crucial simplification. The search for a sparse set of kernels is intractable with a series representation. But the problem in \eqref{eqn_the_central_problem} is tractable in the dual domain. As a byproduct of this more tractable formulation, we can also incorporate the kernel parameter $w$ to the search space and still guarantee tractability. This results in a problem that not only optimizes kernel placement but also kernel width and can even accommodate representations in unions of RKHS, i.e., representations having a mix of kernels of different widths \cite{peifer2020sparse}. 
\end{remark}

\section{Federated Learning}
\label{sec:prob_form}
In the previous section, we have presented learning in a centralized setting. In this section, we illustrate learning in a federated setting in which
a centralized method for obtaining low complexity reproducing kernel Hilbert space  data is collected by a group of agents over the space $\calX$. The federation of agents must work together to find a global model over $\calX$. To this end, the federation adopts the strategy of each agent learning a local model using the data it collects. From that model, the agent detects the critical samples to the classification problem. The agent sends only the critical samples to the central server which learns the global model.
Particularly, given a set of $N_i$ feature-class pairs of the form $(\bbx_n, y_n) \in \ccalT_i$, agent $A_i$ solves the following problem
\begin{prob}[Pi]\label{eqn_the_agent_problem}
	P_i = &\min_{\alpha \in L_2}
		&&\int_{ \ccalS \times \ccalW}
          \frac{1}{2} \alpha^2(\bbs, w) +
		          \gamma \indicator \left[ \alpha(\bbs, w) \neq 0 \right] \,d\bbs dw
	\\
	& \st && \frac{1}{N_i} \ell(f(\bbx_n), y_n) \leq 0 \text{,} \quad  (\bbx_n,y_n) \in \ccalT_i\\
	&    &&  f(\bbx) = \int_{\ccalS \times \ccalW} 
                         \alpha(\bbs,w) 
                             k(\bbx,\bbs \,;\, w) \, d\bbs dw.
\end{prob}
In order to find the set of critical feature-class pairs $\tilde{\calT}_i \subset \calT_i$ and a model parameter to send to the central server. The central unit learns the problem using $\tilde{\calT} = \cup_i \tilde{\calT}_i$ such that $\vert \calT\vert \gg \vert \tilde{\calT}\vert $, where $\calT = \cup_i \calT_i$. 
Typically, each agent $A_i$ is not able to sample $\ccalX$ entirely, but rather observes a subspace $\calX_i$, however, the subspaces, observed by the agents, cover the space $\ccalX$, such that $\cup_i \calX_i = \calX$. 

Notice, problems \eqref{eqn_the_central_problem} and \eqref{eqn_the_agent_problem} are minimizing the same objective function, however, \eqref{eqn_the_central_problem} has additional constraints due to a larger sample set. Problem \eqref{eqn_the_agent_problem} is limited to only samples from a specific subspace. Although this might seem like an initial disadvantage, solving a smaller problem can improve computational speed, whereas, solving \eqref{eqn_the_central_problem} requires a lot of information sharing from each agent which can become impractical. Moreover, by solving the dual problem of \eqref{eqn_the_agent_problem}, we can obtain the critical samples of the classification problem. The central server uses the critical samples from the agents to find the global model. Formally, the central server solves the problem
\begin{prob}[PF]\label{eqn_the_feder_problem}
	PF = &\min_{\alpha}&
		&\int_{ \ccalS \times \ccalW}
          \frac{1}{2} \alpha^2(\bbs, w) +
		          \gamma \indicator \left[ \alpha(\bbs, w) \neq 0 \right] \,d\bbs dw
	\\
	&\st && \frac{1}{N_F} \ell(f(\bbx_n), y_n) \leq 0 \text{,} \quad  (\bbx_n, y_n) \in \tilde{\calT}\\
	&    &&  f(\bbx) = \int_{\ccalS \times \ccalW} 
                         \alpha(\bbs,w) 
                             k(\bbx,\bbs \,;\, w) \, d\bbs dw
		\text{,}
\end{prob}
where $N_F$ is the number of critical samples in $\tilde{\calT}$.
Notice, problems \eqref{eqn_the_central_problem} and \eqref{eqn_the_feder_problem} solve the same problem, however \eqref{eqn_the_feder_problem} solves it for a restricted data set. 
The goal is to find a subset $\tilde{\calX}$ such that the solution of \eqref{eqn_the_feder_problem} is close to that of \eqref{eqn_the_central_problem}. The simplest solution is to make $\tilde{\calT} = \calT$, by pooling the data collected from all the agents and have the central server compute the global model. However, this solution involves a large amount of data to be sent which could surpass the capabilities of the network.
In the next section, we present the algorithm for obtaining the solution to our federated learning problem.

\section{Algorithm}
\label{sec:Algorithm}
The federated classification problem requires the agents to solve their local problem \eqref{eqn_the_agent_problem} in order to find a local model and detect the critical samples. The critical samples to the classification problem are sent to the server. The server then forms the global model by solving \eqref{eqn_the_feder_problem}. The federated classification algorithm is summarized in Algorithm \ref{alg}. 
The next section describes the algorithm for solving the agent problem.
\begin{algorithm}[t]
\caption{Federated classification algorithm}\label{alg}
\begin{algorithmic}
	\For i 
	\State agent $i$ samples the subspace $\calX_i$
	\State agent $i$ solves \eqref{eqn_the_agent_problem} and calculate optimal $\bblam_i^*$
	\State agent $i$ sends critical samples for which $\bblam_{i,n}^* > 0$
	\EndFor

	\State central unit solves \eqref{eqn_the_feder_problem}
\end{algorithmic}

\end{algorithm}

\subsection{The Agent Problem}
The agent solves problem \eqref{eqn_the_agent_problem} in the dual domain. In order to derive the dual problem, agent $i$ defines the Lagrange multiplier $\bblam_i \in \reals_+^{N_i}$, associated with the inequality constraints. Formally, the Lagrangian  is characterized as
\begin{equation} \label{eqn_Lagrange_agent}
\begin{aligned}
    \calL_i(\alpha, \bblam_i) &= \int_{ \ccalS \times \ccalW}
          \frac{1}{2} \alpha^2(\bbs, w) +
		           \gamma \indicator \left[ \alpha(\bbs, w) \neq 0 \right] \,d\bbs dw \\
   & + \frac{1}{N_i} \sum_n^{N_i} \bblam_{i,n} \ell(f(\bbx_{i,n}), y_{i,n}).
\end{aligned}
\end{equation}
Each element of the Lagrangian multiplier is associated with the loss over a single sample point. The Lagrangian is less than or equal to the primal function for any feasible $\alpha$. Therefore, by minimizing the Lagrangian over $\alpha$ each agent obtains a lower bound for the primal problem. This is called the dual function
\begin{equation}\label{eqn_dual_agent}
    g_i(\bblam_i) = \min_{\alpha \in L_2} \calL_i(\alpha, \bblam_i).
\end{equation}
The dual function is the minimum over a set of affine functions of $\bblam$ and is therefore concave \cite{boyd2004convex}. 
Additionally, it is a lower bound to the primal problem for any feasible function $\alpha$ which meets the constraints. Indeed, the dual function is the sum of the primal function and the constraints weighted by the Lagrangian multiplier. In order for a function $\alpha$ to be feasible, the constraints must be non-positive and therefore the dual function can be at most equal to the primal function.
Maximizing the dual function results in the best lower bound. Moreover, when strong duality holds the maximum value of the dual function is equal to the solution of the primal problem. This leads to the formulation of the dual problem
\begin{prob}[Di]\label{eqn_the_dual_agent_problem}
    \begin{aligned}
    &\maximize_{\bblam_i \geq 0} & g_i(\bblam_i).
    \end{aligned}
\end{prob}
Solving the dual problem provides a solution for the primal problem \cite{peifer2020sparse}.
Because the dual function is concave, the dual problem can be solved using gradient descent \cite{boyd2004convex}. The gradients can be obtained by evaluating the constraints at $\alpha_d$, which minimizes the Lagrangian
\begin{equation}\label{eqn_alpha_i}
\alpha_i^\star(\bbs,w, \bblam_i) = \underset{\alpha\in L_2}{\text{argmin}} ~\calL_i(\alpha,\bblam_i).
\end{equation}
In order to find $\alpha_i^\star$ we must minimize the function $\calL_\alpha$, the term of the Lagrangian which depends on $\alpha$ 
\begin{equation}\label{eqn_lagr_alpha}
\begin{aligned}
    \calL_\alpha(\alpha, \bblam_i) &=\int \Bigg[\frac{1}{2} \alpha^2(\bbs,w) + \gamma \indicator[\alpha(\bbs,w) \neq 0] \\ 
   & -\frac{1}{N_i}\sum_n^{N_i} \bblam_{i,n} y_n \alpha(\bbs,w) k(\bbx_n, \bbs;w) \Bigg] d\bbs dw.
\end{aligned}
\end{equation}
The function in \eqref{eqn_lagr_alpha} can be minimized with respect to $\alpha$ for each variable $\bbs$ and $w$ separately \cite{peifer2020sparse}. Therefore, the minimization of $\calL_\alpha$ reduces to the minimization of a quadratic function with a discontinuity at $\alpha=0$ and hence, we obtain a closed-form thresholding solution of $\alpha_d(\bbs,w)$
\begin{equation}\label{eqn_alpha_i_st}
\alpha_i^\star(\bbs, w;\bblam) = \begin{cases}  \bar{\alpha}_i(\bbs, w;\bblam)&  (\bar{\alpha}_i(\bbs, w;\bblam))^2 >2\gamma \\
0 & otherwise,
\end{cases} 
\end{equation}
for which
\begin{equation}
\bar{\alpha}_i(\bbs, w;\bblam) = \frac{1}{N_i} \sum_n \bblam_{i,n} y_n k(\bbs, \bbx_n,w).
\end{equation}
The gradient has the following expression
\begin{equation}\label{eqn_gradient_i}
    d_{\bblam_{i,n}} = \nabla_{\bblam_{i,n}} g_i(\bblam_i) = \frac{1}{N_i} \ell(f_d(\bbx_n),y_n),
\end{equation}
where $f_{i}^\star$ is given by:
\begin{equation}
    f_{i}^\star = \int_{\calX \times \calW} \alpha_i^\star(\bbs,w) k(\bbx,\bbs;w) ~d\bbs dw.
\end{equation}
Each agent starts by initializing the dual variable $\bblam_i \in \reals^{N_i}_+$ to a positive random value. The gradient of the dual function provides the direction of descent, however, it does not provide any information on how close we are to the maximum, nor does it provide any information about how long to move along that direction. Therefore, a small step size $\eta$ to move along the gradient such that the direction of descent is evaluated often. The variable is updated in the direction of the gradient as follows
\begin{equation}\label{eqn_update_i}
    \bblam_i(t+1) = \left[ \bblam_i(t) + \eta ~d(\bblam_i) \right]_+ ,
\end{equation}
where $\left[m\right]_+ = max(0,m)$. The dual problem is constrained to only have non-negative values for $\bblam_i$ and therefore the updates are restricted. 

Once the gradient descent algorithm has converged, the critical samples are identified by examining the optimal dual variable $\bblam_i^*$.
Notice that the Lagrangian is the primal function to which the constraints are added weighted by the Lagrange multiplier $\bblam_i$. For feasible $\alpha$ the constraints are always non-positive. Moreover, at the optimal dual variable, the constraints multiplied by the optimal dual variable have to be zero  for strong duality to hold which means that either the constraints or the dual variable are equal to zero. This is known as complementary slackness \cite{boyd2004convex}. Hence the dual variable is an indicator that certain constraints are difficult to satisfy:
\begin{equation}
\begin{cases}
    1-\epsilon -y_n\hat{y}_n = 0,&  \bblam >0 \\
    1-\epsilon -y_n\hat{y}_n < 0,&  \bblam =0.
    \end{cases}
\end{equation}

The solution to the primal problem \eqref{eqn_the_agent_problem}, $\alpha_i^*(\bbs,w)$, can be found according to the following proposition
\begin{proposition}\label{prop:optimal_primal}
Let $\bblam_i^*$ be the solution of \eqref{eqn_the_dual_agent_problem}, then the solution to problem \eqref{eqn_the_agent_problem} is given by $\alpha_i^\star(\cdot,\cdot, \bblam_i^*)$ from \eqref{eqn_alpha_i_st}.
\end{proposition}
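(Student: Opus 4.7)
The plan is to exploit the strong duality result of Theorem~\ref{th:zero_duality} to lift the dual optimum $\bblam_i^*$ to a primal optimum via the Lagrangian-minimizing map $\alpha_i^\star(\cdot,\cdot,\bblam)$ defined in \eqref{eqn_alpha_i_st}. Because strong duality holds, $P_i = g_i(\bblam_i^*) = \calL_i(\alpha_i^\star(\cdot,\cdot,\bblam_i^*),\bblam_i^*)$ by definition of $\alpha_i^\star$, so it suffices to argue that this Lagrangian minimizer is also primal feasible for \eqref{eqn_the_agent_problem} and attains the primal cost $P_i$.

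To carry this out, I would start from any primal minimizer $\alpha_p$ of \eqref{eqn_the_agent_problem}, whose existence follows from the zero-duality-gap construction in \cite{peifer2020sparse}, and evaluate the Lagrangian at $(\alpha_p,\bblam_i^*)$. Since $\alpha_p$ is primal feasible each $\ell(f_p(\bbx_n),y_n)\leq 0$, and since $\bblam_i^* \geq 0$ this yields $\calL_i(\alpha_p,\bblam_i^*) \leq \rho(\alpha_p) = P_i$. On the other hand, by definition of $g_i$ we have $\calL_i(\alpha_p,\bblam_i^*) \geq \min_{\alpha}\calL_i(\alpha,\bblam_i^*) = g_i(\bblam_i^*) = P_i$. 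Equality throughout then forces complementary slackness $\sum_n \bblam_{i,n}^*\,\ell(f_p(\bbx_n),y_n)=0$ and, more crucially, shows that $\alpha_p$ is itself a minimizer of $\calL_i(\cdot,\bblam_i^*)$.

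It remains to identify $\alpha_p$ with $\alpha_i^\star(\cdot,\cdot,\bblam_i^*)$. The integrand in \eqref{eqn_lagr_alpha} decouples across $(\bbs,w)$ into the scalar minimization of $h(\alpha) = \tfrac{1}{2}\alpha^2 + \gamma\,\indicator[\alpha\neq 0] - \bar{\alpha}_i(\bbs,w;\bblam_i^*)\,\alpha$. A direct comparison of the two candidate values $\alpha=0$ and $\alpha=\bar{\alpha}_i(\bbs,w;\bblam_i^*)$ shows that $h$ has a unique minimizer, given exactly by the thresholding rule \eqref{eqn_alpha_i_st}, everywhere except on the threshold surface $\{(\bbs,w):\bar{\alpha}_i^2(\bbs,w;\bblam_i^*)=2\gamma\}$. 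Hence $\alpha_p$ agrees with $\alpha_i^\star(\cdot,\cdot,\bblam_i^*)$ almost everywhere and they represent the same element of $L_2(\ccalS\times\ccalW)$, which is what the proposition asserts.

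The main obstacle is precisely this non-uniqueness step: the $L_0$ term makes $\calL_i(\cdot,\bblam)$ non-convex, so the standard convex-KKT recovery of a primal optimum from a Lagrangian minimizer does not apply directly, and one has to verify that the threshold surface is Lebesgue-negligible. This is ensured by the non-degeneracy of the kernel family encoded in condition (i) of Theorem~\ref{th:zero_duality}, which prevents $\bar{\alpha}_i(\cdot,\cdot;\bblam_i^*)^2$ from being identically $2\gamma$ on any set of positive measure.
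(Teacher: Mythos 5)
The paper offers no proof of Proposition~\ref{prop:optimal_primal} --- it defers entirely to \cite{peifer2020sparse} --- so there is no in-paper argument to compare against; your reconstruction is the standard saddle-point recovery and is essentially the argument the cited reference makes. The sandwich $P_i = g_i(\bblam_i^*) \leq \calL_i(\alpha_p,\bblam_i^*) \leq \rho(\alpha_p) = P_i$ correctly forces any primal optimum $\alpha_p$ to minimize $\calL_i(\cdot,\bblam_i^*)$, and the pointwise decoupling of \eqref{eqn_lagr_alpha} correctly identifies that minimizer with the thresholding rule \eqref{eqn_alpha_i_st} off the set $\{\bar{\alpha}_i^2 = 2\gamma\}$. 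Two caveats, both of which you flag but outsource. First, your direction of argument presupposes that a primal minimizer exists, which is not automatic for a non-convex, infinite-dimensional problem; the self-contained route runs the other way: use the first-order optimality conditions of the concave dual problem \eqref{eqn_the_dual_agent_problem} (namely $\ell(f_i^\star(\bbx_n),y_n)\leq 0$ with equality whenever $\bblam_{i,n}^*>0$, obtained from the gradient expression \eqref{eqn_gradient_i}) to show that $\alpha_i^\star(\cdot,\cdot,\bblam_i^*)$ is itself primal feasible, and then conclude $\rho(\alpha_i^\star) = \calL_i(\alpha_i^\star,\bblam_i^*) = g_i(\bblam_i^*) = P_i$ by complementary slackness, so it attains the optimum without assuming one exists beforehand. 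Second, ``no point masses'' does not by itself make the threshold surface Lebesgue-negligible --- a continuous function can equal $\pm\sqrt{2\gamma}$ on a set of positive measure --- so the essential-uniqueness step (which is also what legitimizes the gradient formula via Danskin's theorem) rests on a genuinely stronger non-degeneracy lemma about finite linear combinations of the kernel family, proved in \cite{peifer2020sparse}; that lemma, not condition (i) verbatim, is the load-bearing ingredient. Neither issue breaks your argument, but both mark exactly where the deferred proof does its real work.
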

A formal proof can be found in \cite{peifer2020sparse}. Proposition \ref{prop:optimal_primal} suggests that the solution to problem \eqref{eqn_the_agent_problem}, $\alpha_i^*(\bbs,w)$ is a weighted sum of kernels centered at the sample points. Samples, for which the dual variable $\bblam_{i,n}^* = 0$, do not contribute to the function $\alpha_i^*$ and therefore are not considered critical to the problem.
\begin{algorithm}[t]
\caption{Agent algorithm}\label{alg_agent}
\begin{algorithmic}
\State Collects data over subspace $\calX_i$
\State Initialize $\bblam_i(0) > 0$ randomly
	\For $t = 0, \dots, T$
	\State Compute $\alpha_i(\bbs,w, \bblam_i)$
	\begin{equation*}
		\alpha_i^\star(\bbs, w, \bblam_i) = \begin{cases}
			\bar{\alpha}_i(\bbs, w, \bblam_i)
			\text{,} &\left\vert \bar{\alpha}_i(\bbs,w, \bblam_i) \right\vert > \sqrt{2\gamma}
			\\
			0 \text{,} &\text{otherwise}
		\end{cases}
	\end{equation*}
	for ~$\bar{\alpha}_i(\bbs, w;\bblam_i) = \frac{1}{N_i} \sum_n \bblam_{i,n} y_n k(\bbs, \bbx_n,w)$
	\State evaluate the gradient 
	\begin{equation*}
	    d_{\bblam_n} (t) = \frac{1}{N_i} \ell(f_d(x_n),y_n)
	\end{equation*}
	for which
	\begin{equation*}
	    f_d(x_n) = \int_{\calX\times \calW} \alpha_d(\bbs,w) k(\bbx_n,\bbs;w) d\bbs dw
	\end{equation*}
	\State Update local parameter  
	\begin{equation*}
	    \bblam_{i,n}(t+1) = \left[\bblam_{i,n}(t) + \eta~ d_{\bblam_n} (t) \right]_+
	\end{equation*}
	\EndFor
	\State Let the local optimal dual variable be $\bblam_i^\star = \bblam_i(t+1)$
	\State Determine critical sample pairs: $\tilde{\calT}_i = \{ (\bbx_n, y_n) \mid \bblam_n^*>0\}$
	\State Send $\Tilde{\calT}_i$ and  $\tilde{\bblam}_i^* = \{ \bblam_n^* \mid \bblam_n^*>0\}$ to the server
\end{algorithmic}
\end{algorithm}

\subsection{The Server Problem} 
The server receives the critical sample pairs $\tilde{\calT}_i$ from each agent along with the optimal dual variables $\tilde{\bblam}_i^*$ and  forms the training its set $\tilde{\calT} = \cup_i \tilde{\calT}_i$ which is used to solve problem \eqref{eqn_the_feder_problem} in the dual domain. The Lagrange multiplier $\bblam_F \in \reals_+^{N_F}$ is defined in order to formulate the Lagrangian of \eqref{eqn_the_feder_problem} 
\begin{equation}\label{eqn_lagrang_feder}
\begin{aligned}
    \calL_F(\alpha, \bblam_F) &= \int_{ \ccalS \times \ccalW}
          \frac{1}{2} \alpha^2(\bbs, w) +
		           \indicator \left[ \alpha(\bbs, w) \neq 0 \right] \,d\bbs dw \\
   & + \frac{1}{N_F} \sum_n^{N_F} \bblam_{F,n} \ell(f(\bbx_{n}), y_{n}).
\end{aligned}
\end{equation}
In a similar manner to \eqref{eqn_dual_agent} and \eqref{eqn_the_dual_agent_problem} the dual function $ g_F(\bblam_F)$ and the corresponding dual problem are established.
The server solves its dual problem using gradient descent. The gradients are computed by evaluating the constraints of \eqref{eqn_the_feder_problem} at $\alpha_F^\star = \argmin_{\alpha} \calL_F(\alpha,\bblam_F)$
\begin{equation}\label{eqn_gradient}
    d_{\bblam_F ,n} = \nabla_{\bblam_n} g_F(\bblam_F) = \frac{1}{N_F} \ell(f^\star(\bbx_n),y_n),
\end{equation}
where $f^\star$ is given by:
\begin{equation}
    f^\star = \int_{\calX \times \calW} \alpha_F^\star(\bbs,w) k(\bbx,\bbs;w) ~d\bbs dw.
\end{equation}
At the beginning of the algorithm, the server initializes the dual variable $\bblam_F(0) = \left[\tilde{\bblam}^*_1, \dots \tilde{\bblam}_K\right]$. Then for each iteration $t$,  $\alpha_F^\star(\bbs,w,\bblam_F(t))$ is computed and used to find the gradient according to \eqref{eqn_gradient}. A small step size $\eta_F$ is chosen to move along the gradient and update the variables
\begin{equation}\label{eqn_update}
    \bblam_F(t+1) = \left[ \bblam_F(t) + \eta_F ~ d_{\bblam_F}(t) \right]_+ ,
\end{equation}
where $\left[m\right]_+ = max(0,m)$.  

\begin{algorithm}[t]
\caption{Server algorithm}\label{alg_server}
\begin{algorithmic}
    \State Receive $\tilde{\calT}_i$ from all agents $i = 1,\dots,K$
    \State Initialize $\bblam_F(0) = [\tilde{\bblam}_1, \dots, \tilde{\bblam}_K]$
	\For $t = 0, \dots, T_F$
	\State Compute $\alpha_F^\star(\bbs,w)$
	\begin{equation*}
		\alpha_F^\star(\bbs, w,\bblam_F) = \begin{cases}
			\bar{\alpha}_F(\bbs, w, \bblam_F)
			\text{,} &\left\vert \bar{\alpha}_F(\bbs,w, \bblam_F) \right\vert > \sqrt{2\gamma}
			\\
			0 \text{,} &\text{otherwise}
		\end{cases}
	\end{equation*}
	for ~$\bar{\alpha}_F(\bbs, w;\bblam_F) = \frac{1}{N_F} \sum_n \bblam_{F,n} y_n k(\bbs, \bbx_n,w)$
	\State evaluate the gradient 
	\begin{equation*}
	    d_{\bblam_{F,n}} (t) = \frac{1}{N_F} \ell(f^\star(x_n),y_n)
	\end{equation*}
	for which
	\begin{equation*}
	    f^\star(x_n) = \int_{\calX\times \calW} \alpha_F^\star(\bbs,w) k(\bbx_n,\bbs;w) d\bbs dw
	\end{equation*}
	\State Update local parameter  
	\begin{equation*}
	    \bblam_{F,n}(t+1) = \left[\bblam_{F,n}(t) + \eta~ d_{\bblam_F,n} (t)\right]_+
	\end{equation*}
	\EndFor
    \State Compute global $\alpha^*(\bbs,w) = \alpha^\star(\bbs,w,\bblam_F(T_F+1))$
	\State Send global model to the agents
\end{algorithmic}

\end{algorithm}

\section{Convergence of federated problem}
	\label{sec:optimality}
In the previous section, we have presented a federated learning problem and proposed a method for each agent to solve a local problem and transmit a set of critical samples to a central server which in turn produces a global model. In this section, we argue that solving \eqref{eqn_the_feder_problem} becomes equivalent to solving \eqref{eqn_the_central_problem} as the training sample size grows. First, let's examine the solution to the centralized problem \eqref{eqn_the_central_problem}.
\subsection{Learning the Centralized Problem}
Similarly to the agent problem \eqref{eqn_the_agent_problem} and the server problem \eqref{eqn_the_feder_problem}, the centralized problem \eqref{eqn_the_central_problem} is solved in the dual domain. In order to derive the dual problem, we first start by introducing the Lagrange multiplier $\bblam \in \reals^N_+$, associated with the inequality constraints. Formally, we introduce the Lagrangian 
\begin{equation}\label{eq_central_Lagrange}
\begin{aligned}
	\calL(\alpha,\bblam) &=
		\frac{1}{2} \Vert \alpha \Vert_{L_2}^2 + \gamma \Vert \alpha \Vert_{L_0} \\
		{} &+ \frac{1}{N} \sum_{n = 1}^N \bblam_{n} \ell(f(\bbx_n), y_n)
		\text{.}
\end{aligned}
\end{equation}
In a similar manner to the federated problem, the central learner obtains the dual function and the dual problem. The dual function is concave and therefore the dual problem is solved using gradient descent. The gradients are computed by evaluating the constraints at the variable $\alpha_d$, which minimises the Lagrangian
$
\alpha_d(\bbs,w) = \underset{\alpha\in L_2}{\text{argmin}} ~\calL(\alpha,\bblam).
$
The variable $\alpha_d$ which minimizes the Lagrangian \eqref{eq_central_Lagrange} has the following expression
\begin{equation}\label{eqn_alpha_d}
\alpha_d(\bbs, w;\bblam) = \begin{cases}  \bar{\alpha}_d(\bbs, w;\bblam)&  (\bar{\alpha}_d(\bbs, w;\bblam))^2 >2\gamma \\
0 & otherwise,
\end{cases} 
\end{equation}
for which
\begin{equation}
\bar{\alpha}_d(\bbs, w;\bblam) = \frac{1}{N} \sum_n \bblam_{i,n} y_n k(\bbs, \bbx_n,w).
\end{equation}
Using \eqref{eqn_alpha_d}, we can form a closed form expression for the dual function as the quadratic function, given the measure $m(\calX, \calW) = \int \indicator[\alpha_l(\bbs,w) \neq 0] d\bbs \,dw$ 
\begin{equation}\label{eq_dual}
\begin{aligned}
g(\bblam) = &-0.5 \bblam^\top \bbQ \bblam
+ \frac{1}{N}\bblam^\top (\textbf{1} - \bbepsilon) + m(\ccalX, \ccalW), 
\end{aligned}
\end{equation}
with $\bbQ$ being a positive definite matrix for which
\begin{equation}
\bbQ_{nm} = \int_{\calC} \frac{1}{N^2} y_n y_m k(\bbx_n,\bbs;w)k(\bbx_m,\bbs;w)  d\bbs dw,
\end{equation}
where $\calC = \{ (\bbs,w)~|~ \alpha_d(\bbs,w) \neq 0\}$.  
\subsection{Critical Samples}
In section \ref{sec:Algorithm}, we have claimed that the critical samples are determined by the values of the optimal dual variable. Particularly, given a set of samples, only the sample points which contribute to the classification model are considered critical. 
The following proposition shows that these critical points are not just particular to this training set but to the classification problem in general.
\begin{proposition}\label{lem_zeros}
Let $\alpha^*$ be the optimal variable of \eqref{eqn_the_central_problem} trained on data set $\bbX$, $\alpha'^*$ be the optimal variable of \eqref{eqn_the_central_problem} trained on data set $\bbX' = \bbX \setminus \left\lbrace \bbx_n\right\rbrace $ and $\hat{y}_n = \int \alpha'^*(\bbs,w) k(\bbx_n,\bbs,w) d\bbs dw$.
The dual optimal variable associated with the nth sample, $\lambda_n^* = 0$ if and only if  $1 - \epsilon-y_n  \hat{y}_n < 0$  and the solutions to the data $\bbX$ and the data $\bbX'$ are equal. 
\end{proposition}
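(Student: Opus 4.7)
The plan is to split the biconditional into its two implications and exploit the closed-form primal-dual relation \eqref{eqn_alpha_d}, strong duality (Theorem \ref{th:zero_duality}), and KKT complementary slackness. Let $P$ and $P'$ denote the primal optima of problem \eqref{eqn_the_central_problem} on $\bbX$ and $\bbX'$ respectively; since adding the sample $\bbx_n$ only adds one inequality constraint, $P \geq P'$.

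For the forward direction, I would first argue that $\alpha^*=\alpha'^*$. The key observation is that the primal minimizer depends on the dual variable only through the kernel combination $\bar{\alpha}^*(\bbs,w)=\tfrac{1}{N}\sum_{m}\bblam_m^* y_m k(\bbs,\bbx_m;w)$ via the thresholding rule \eqref{eqn_alpha_d}. When $\bblam_n^*=0$, the $n$th term drops out; rescaling $\bblam'_m = \tfrac{N-1}{N}\bblam_m^*$ for $m\neq n$ then makes the corresponding $\bar\alpha$ for the $\bbX'$ problem identical to $\bar\alpha^*$, so $\alpha^*$ is also a minimizer of the $\bbX'$ Lagrangian at $\bblam'$. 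Hence $g'(\bblam')=g(\bblam^*)$. Combining with strong duality on $\bbX$, weak duality on $\bbX'$, and $P\geq P'$ yields
\begin{equation*}
P = g(\bblam^*) = g'(\bblam') \leq P' \leq P,
\end{equation*}
so equalities hold throughout. Thus $\bblam'$ attains the $\bbX'$-dual optimum and $\alpha^*$ attains the $\bbX'$-primal optimum, giving $\alpha^*=\alpha'^*$ and therefore $\hat{y}_n = f^*(\bbx_n)$. Feasibility in the $\bbX$ problem already yields $1-\epsilon-y_n\hat{y}_n \leq 0$; the strict inequality follows from a strict-complementary-slackness (nondegeneracy) argument, which is implicit in the projected-gradient update \eqref{eqn_update_i}, since a fixed point at $\bblam_n=0$ with $d_{\bblam_n}=0$ would be spurious and could be ruled out by a local perturbation of $\bblam^*$ that either increases $g$ or contradicts uniqueness of the dual optimum.

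For the reverse direction, assume $\alpha^*=\alpha'^*$ and $1-\epsilon-y_n\hat{y}_n<0$. Equality of solutions gives $\hat{y}_n=f^*(\bbx_n)$, so the $n$th constraint in the $\bbX$ problem is strictly slack. The KKT condition $\bblam_n^*\bigl(1-\epsilon-y_n f^*(\bbx_n)\bigr)=0$, valid by strong duality, then forces $\bblam_n^*=0$.

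The main obstacle I anticipate is the strict-inequality portion of the forward direction: it is not a direct consequence of $\bblam_n^*=0$ alone, and requires ruling out the degenerate case of a vanishing multiplier together with an active constraint. I would handle this either by adopting strict complementary slackness as a standing nondegeneracy assumption, or by arguing that in such a degenerate configuration the removed constraint is redundant, so that $\hat y_n$ coincides with the prediction made purely from the other samples and can be perturbed away from the margin without loss of optimality.
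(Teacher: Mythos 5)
Your proposal is correct and follows essentially the same route as the paper's appendix: the $(N-1)/N$ rescaling of the dual variable, the equality chain through strong/weak duality combined with $P \geq P'$, and complementary slackness for the reverse direction all mirror the paper's argument. The one divergence is to your credit: you explicitly flag that the strict inequality $1-\epsilon-y_n\hat{y}_n<0$ does not follow from $\lambda_n^*=0$ by ordinary complementary slackness and requires a strict-complementarity or nondegeneracy assumption, whereas the paper asserts this implication without justification.
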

\begin{proof}
See Appendix \ref{A:Fund}
\end{proof}
This proposition implies that if the federated learner \eqref{eqn_the_feder_problem} and the centralized learner \eqref{eqn_the_central_problem} agree on the critical samples then solving the two problems is equivalent. Furthermore, it is sufficient for the agent learner \eqref{eqn_the_agent_problem} to agree with the centralized learner despite only sampling from a subspace of $\calX$. Next we will argue that this is in fact the case as the sample size grows.

We consider the case in which the subspaces sampled by the agents are not distinct, i.e., there exists at least one pair $i,j$ such that $\calX_i \cap \calX_j \neq \emptyset$. 
If all subspaces are disjoint the problem becomes trivial. In this case, there is no need to form a global model because the agents do not gain useful information from other agents. Given a new sample, its classification can be done by simply finding the space to which it belongs and using the model of the respective agent. It should be noted that the problem of identifying the subspace is not trivial, yet in a federated learning setting a new sample generally, belongs to the subspace of the agent that has collected it. Similarly, in the case in which agents sample over the same space, i.e., $\calX_i = \calX_j, \forall~ i,j$ the need for sharing data across agents disappears. As the agents collect more data their models will converge. 
We are therefore, interested in the case for which there exist at least one  pair of agents $A_i, A_j$ such that $\ccalX_i \cap \ccalX_j \neq \emptyset$ and $\calX_i \neq \calX_j$. 
We make the following hypothesis about the kernels centered at points belonging exclusively to one subspace 

\begin{hyp}\label{h:overlap}
The overlap between any two partitions is large enough such that there exists a small $\xi>0$ such that for all $\bbx_i \in \{\calX_i \setminus \calX_j \}$ and $\bbs \in \{\calX_j \setminus \calX_i \} $ and  $w\in \calW$ 
\begin{equation}\label{assum_ovl}
k(\bbx_i, \bbs;w) \leq \xi.
\end{equation}
\end{hyp}
This hypothesis implies that samples which uniquely belong to a subspace do not affect the models of other subspaces in the non-overlapping regions. Indeed, recall that the function $\alpha(\bbs,w)$ is a weighted sum of the kernels centered at the sample points, therefore a point located in a non overlapping part of a subspace the weighted sum of the kernels outside that partition will be small and not contribute significantly to the value of $\alpha$.
Additionally, we make the following assumption about the choice of $\gamma$
\begin{hyp}\label{h:big_gamma}
Let ~ $\calC = \{ (\bbs,w)~|~ \alpha^*(\bbs,w) \neq 0\}$ be the support of the optimal value $\alpha^*(\bbs,w)$ of \eqref{eqn_the_agent_problem}. We choose the variable $\gamma$ that leads to $\calC$ being rich enough such that there exists a $\mu>0$ for which
\begin{equation}
\bbQ = \int_\calC \bbq(\bbs,w)\bbq^\top(\bbs,w) ds\,dw \succeq \mu \bbI
\end{equation}
where ~the variable $\mu$ represents the smallest eigenvalue of the matrix $\bbQ$, and ~$\bbq_n(\bbs,w) = (1/N) y_n k(\bbx_n,\bbs;w)$.
\end{hyp}
Notice that this hypothesis relies on the choice of $\gamma$. In theory, the choice of $\gamma$ does not cause the measure of $\calC$ to go to zero. In practice, however, the choice of $\gamma$ becomes more important (see Remark \ref{rem_gamma}). 
Furthermore, the hypothesis suggests that the matrix $Q$ is positive definite. As a consequence, the dual function is strongly concave near the optimal  $\bblam^*$ and we can formulate the following lemma.
\begin{lemma}\label{strong_concav}
The dual function $g(\bblam)$ for the problem in \eqref{eqn_the_agent_problem} is strongly concave near the optimal value, $\bblam^*$. The strong concavity parameter $\mu$ as defined in Hypothesis \ref{h:big_gamma} such that
\begin{equation}
-g(\bblam) \geq -g(\bblam^*) - \nabla g(\bblam^*)^T(\bblam-\bblam^*) + \frac{\mu}{2} \Vert \bblam-\bblam^* \Vert^2
\end{equation}
and $\mu$ corresponds to the smallest eigenvalue of $\bbQ$.
\end{lemma}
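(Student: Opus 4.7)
The plan is to exploit the closed-form expression \eqref{eq_dual} derived earlier, $g(\bblam) = -\tfrac{1}{2}\bblam^\top \bbQ \bblam + \tfrac{1}{N}\bblam^\top(\mathbf{1}-\bbepsilon) + m(\ccalX,\ccalW)$, and to show that in a neighborhood $\calU$ of $\bblam^*$ both the matrix $\bbQ$ and the measure $m(\ccalX,\ccalW)$ are constant in $\bblam$. Once this local stability is in hand, $g$ restricted to $\calU$ becomes a concave quadratic with Hessian $-\bbQ$, and strong concavity with parameter equal to the smallest eigenvalue of $\bbQ$ follows by an exact second-order Taylor expansion combined with Hypothesis \ref{h:big_gamma}.

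The key step is the local constancy of the support $\calC(\bblam) = \{(\bbs,w) : \alpha_d(\bbs,w;\bblam) \neq 0\}$. From \eqref{eqn_alpha_d}, $(\bbs,w) \in \calC(\bblam)$ exactly when $(\bar{\alpha}_d(\bbs,w;\bblam))^2 > 2\gamma$, and $\bar{\alpha}_d$ is linear in $\bblam$ with coefficients bounded by the kernel. Hence the map $\bblam \mapsto (\bar{\alpha}_d(\bbs,w;\bblam))^2$ is Lipschitz continuous with a Lipschitz constant uniform in $(\bbs,w)$, so for $\bblam$ in a sufficiently small ball around $\bblam^*$ the strict inequality $(\bar{\alpha}_d)^2 > 2\gamma$ is preserved wherever it held strictly at $\bblam^*$, and the reverse strict inequality is preserved wherever that held. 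Provided the boundary level set $\{(\bbs,w) : (\bar{\alpha}_d(\bbs,w;\bblam^*))^2 = 2\gamma\}$ has Lebesgue measure zero in $\ccalS \times \ccalW$ --- a generic condition under which the thresholding \eqref{eqn_alpha_d} is unambiguous --- the symmetric difference $\calC(\bblam) \triangle \calC(\bblam^*)$ is null for all $\bblam$ in a neighborhood $\calU$ of $\bblam^*$. Consequently both $\bbQ$ (an integral over $\calC$) and $m(\ccalX,\ccalW)$ (the Lebesgue measure of $\calC$) are constant throughout $\calU$.

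With $\bbQ$ and $m(\ccalX,\ccalW)$ frozen on $\calU$, the expression \eqref{eq_dual} is exactly quadratic in $\bblam$, so $g$ is twice differentiable on $\calU$ with Hessian $\nabla^2 g(\bblam) = -\bbQ$. The exact Taylor expansion around $\bblam^*$ gives
\begin{equation*}
g(\bblam) = g(\bblam^*) + \nabla g(\bblam^*)^\top (\bblam-\bblam^*) - \tfrac{1}{2}(\bblam-\bblam^*)^\top \bbQ (\bblam-\bblam^*),
\end{equation*}
and combining this with Hypothesis \ref{h:big_gamma} --- which yields $(\bblam-\bblam^*)^\top \bbQ (\bblam-\bblam^*) \geq \mu \Vert \bblam-\bblam^* \Vert^2$ --- and negating both sides produces the claimed inequality. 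The main obstacle is the first step: because the indicator in \eqref{eqn_elastic_net} makes $\alpha_d$ jump across the threshold $(\bar{\alpha}_d)^2 = 2\gamma$, the dual function is only piecewise smooth, and the generic-position argument sketched above is what is needed to ensure that the measure term $m$ contributes no additional local variation and that $\bbQ$ remains the Gram-type matrix over the same region $\calC$ throughout $\calU$.
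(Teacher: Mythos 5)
Your proof takes essentially the same route as the paper's: both reduce the claim to the exact quadratic expansion of the closed-form dual \eqref{eq_dual} and invoke Hypothesis \ref{h:big_gamma} to lower-bound the quadratic term by $\frac{\mu}{2}\Vert\bblam-\bblam^*\Vert^2$; the paper simply \emph{assumes} that the support $\calC$ (hence $\bbQ$ and $m$) is unchanged between $\bblam^*$ and the perturbed point, whereas you attempt to justify it. One caveat on that justification: pointwise preservation of the strict inequality $(\bar{\alpha}_d)^2>2\gamma$ does not yield a single neighborhood $\calU$ on which $\calC(\bblam)\triangle\calC(\bblam^*)$ is null, because the margin $(\bar{\alpha}_d(\bbs,w;\bblam^*))^2-2\gamma$ is generally not bounded away from zero over $\calC(\bblam^*)$; what the Lipschitz-plus-null-boundary argument actually delivers is $m\bigl(\calC(\bblam)\triangle\calC(\bblam^*)\bigr)\to 0$ as $\bblam\to\bblam^*$, i.e.\ continuity rather than local constancy of $\bbQ$ and $m$. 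This is no worse than the paper's own ``approximately equal support'' assumption, but to obtain the exact constant $\mu$ you would need either a uniform positive margin on $\calC(\bblam^*)$ or to absorb the vanishing perturbation of $\bbQ$ into a slightly degraded concavity parameter.
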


\begin{proof}
First recall the definition of the dual function:
\begin{equation}\label{dual_red}
g(\bblam) = -\frac{1}{2} \bblam^\top \bbQ \bblam + \bblam^\top (\textbf{1} - \bbepsilon) + m(\calX, \calW)
\end{equation}
There must exist a variable $\delta > 0$ such that
\begin{equation}
g_i(\lambda^* + \delta) < g_i(\lambda^*)
\end{equation}
with $\delta$ close to zero. We will show that there exists a $\mu > 0$ such that
\begin{equation}\label{strong_convex}
g(\lambda^*)-g(\lambda^* + \delta) +\nabla g(\lambda^*)^\top (\delta) \geq  \frac{\mu}{2} \Vert \delta \Vert^2
\end{equation} 
We can calculate the value on the right side of the inequality we assume that the support of the matrix $\bbQ$ is approximately equal for $\bblam^*$ and $\bblam^* + \delta$. 
\begin{equation}
\begin{aligned}
&g(\lambda^*) -g(\lambda^* + \delta)  + g(\lambda^*)^\top (\delta)  = \\
&-\frac{1}{2} \sum_i \sum_j \lambda_i^* \lambda_j^* \bbQ_{ij} + \sum_i \lambda_i^*(1-\epsilon_i) + m(\calS, \calW)  \\
&+ \frac{1}{2} \sum_i \sum_j \lambda_i^* \lambda_j^* \bbQ_{ij} + \sum_i \sum_j \lambda_i^* \delta_j \bbQ_{ij} + \frac{1}{2} \sum_i \sum_j \delta_i \delta_j \bbQ_{ij} \\
&-\sum_i \lambda_i^*(1-\epsilon_i) - \sum_i \delta_i(1-\epsilon_i) - m(\calS,\calW) \\
&-\sum_i \sum_j \lambda_i^* \delta_j \bbQ_{ij} + \sum_i \delta_i(1-\epsilon_i) = \\
&\frac{1}{2}  \bbdelta^\top \int_\calC \bbq(s,w) \bbq^\top(s,w)ds dw ~\bbdelta \geq \frac{\mu}{2} \Vert \bbdelta \Vert^2
\end{aligned}
\end{equation}
This proves that the dual function is strongly concave near the optimal value.
\end{proof}
Notice that as Hypothesis \ref{h:big_gamma} and Lemma \ref{strong_concav} apply not only to problem \eqref{eqn_the_agent_problem}, they also apply to problems \eqref{eqn_the_central_problem} and \eqref{eqn_the_feder_problem}.
Given two hypothesis, we can state that the solutions of \eqref{eqn_the_feder_problem} and \eqref{eqn_the_central_problem} converge to each other.
\begin{theorem}\label{th_part}
Let $\alpha_C^*$ and $\alpha_F^*$ be the solution to the problem \eqref{eqn_the_central_problem} and \eqref{eqn_the_feder_problem} respectively.
Given that hypotheses \ref{h:overlap} and \ref{h:big_gamma} hold, the two solutions converge, as the sample size grows 
\begin{equation}
\lim_{N\rightarrow \infty} \vert \alpha_F^*(\bbs,w) - \alpha_C^*(\bbs,w) \vert \rightarrow 0.
\end{equation}
\end{theorem}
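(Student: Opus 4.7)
The plan is to establish convergence at the level of the dual variables and then transfer back to the primal using the thresholding identity \eqref{eqn_alpha_d}. By Proposition \ref{lem_zeros}, removing any sample with $\lambda_n^* = 0$ from the centralized problem leaves its optimizer unchanged, so \eqref{eqn_the_central_problem} is equivalent to the same problem restricted to its set of critical samples $\ccalT^* = \{(\bbx_n,y_n) : \lambda_{C,n}^* > 0\}$. The theorem therefore reduces to showing that, asymptotically, the federated critical set $\tilde{\ccalT}$ produced by the agents coincides with $\ccalT^*$ and that the two induced dual problems share a common optimizer.

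The first step I would take is to show that each agent's local criticality decision agrees with the centralized one for samples lying in its exclusive region. Fix $\bbx_n \in \ccalX_i \setminus \ccalX_j$ and consider the pre-threshold function $\bar{\alpha}(\bbs,w;\bblam)$ that governs both the constraint values (via $f$) and $\alpha^\star$ via \eqref{eqn_alpha_i_st}. Hypothesis \ref{h:overlap} implies that the kernel contributions of samples from $\ccalX_j \setminus \ccalX_i$ to $\bar{\alpha}$, when evaluated in agent $i$'s exclusive region, are uniformly bounded by $\xi$, so dropping those samples perturbs $1-\epsilon - y_n f(\bbx_n)$ by $O(\xi)$. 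Consequently, whether agent $i$ identifies $\bbx_n$ as critical matches the centralized identification up to an $O(\xi)$ slack in the constraint, which by complementary slackness does not alter the strict sign away from a measure-zero boundary.

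For samples in overlap regions $\ccalX_i \cap \ccalX_j$ I would rely on the growing sample size: as $N \to \infty$, each overlap region is sampled densely by every agent whose region contains it, so the local subproblem restricted to that region is an increasingly accurate sufficient statistic for the centralized problem restricted to that region. Combined with the previous paragraph, this yields $\tilde{\ccalT} \to \ccalT^*$, from which one can show that the gradient of $g_F$ at $\bblam_C^*$ (extended by zero on non-critical coordinates) tends to the gradient of $g_C$ at $\bblam_C^*$, which vanishes by optimality. Applying Lemma \ref{strong_concav} to the concave $g_F$ then gives $\|\bblam_F^* - \bblam_C^*\|^2 \le (2/\mu)\,|g_F(\bblam_C^*) - g_F(\bblam_F^*)|$, whose right-hand side can be controlled by the above gradient-difference argument to conclude $\bblam_F^* \to \bblam_C^*$. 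Substituting back into \eqref{eqn_alpha_d} and using continuity of $\bar{\alpha}$ in $\bblam$ then transfers the convergence to $|\alpha_F^*(\bbs,w) - \alpha_C^*(\bbs,w)| \to 0$.

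The main obstacle will be the thresholding discontinuity in \eqref{eqn_alpha_d}: arbitrarily small perturbations in $\bblam$ can toggle $\alpha^\star$ between $\bar{\alpha}$ and $0$ wherever $\bar{\alpha}^2$ crosses the level $2\gamma$. I would need to argue that, on the set where this toggling occurs, the mass contributed is $o(1)$ as $\bblam_F^* \to \bblam_C^*$, either by imposing a transversality-type condition that $\bar{\alpha}_C^2$ does not concentrate mass at $2\gamma$, or by passing to an integrated ($L_2$) notion of convergence for the primal. A secondary technical point is that Hypothesis \ref{h:overlap} supplies a fixed $\xi$ rather than $\xi \to 0$; handling this carefully requires the above complementary-slackness observation, since criticality is a sign condition that is stable under $O(\xi)$ perturbations once $N$ is large enough that the binding constraints are strictly binding by more than $\xi$.
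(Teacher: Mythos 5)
Your overall strategy is the same as the paper's: decompose the analysis into overlap and exclusive regions, argue that the agents and the centralized learner asymptotically agree on the critical samples so that Proposition \ref{lem_zeros} makes the two problems equivalent, and use strong concavity of the dual (Lemma \ref{strong_concav}) to convert closeness of dual optimizers into closeness of the primal $\alpha$'s via the thresholding formula. Your treatment of the exclusive regions (kernel contributions from other agents' samples bounded by $\xi$, hence $O(\xi)$ perturbation of the constraint values) is the same idea the paper packages as Lemma \ref{lemma_distinct} and Theorem \ref{theorem:separate}, though the paper works at the level of the dual function difference $\vert g(\bblam) - \sum_i g_i(\bblam_i)\vert \le 2\xi m L/N^2$ rather than directly on constraint slacks, which is what produces its explicit rate $2\sqrt{2\xi m L}/(N\sqrt{\mu N})$.

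The genuine gap is in your overlap-region step. You assert that because each overlap region is sampled densely, ``the local subproblem restricted to that region is an increasingly accurate sufficient statistic for the centralized problem restricted to that region,'' but this is exactly the nontrivial claim, and the paper devotes a separate result to it: Theorem \ref{th_overlap} shows that two agents sampling the same distribution produce solutions satisfying $\vert \alpha_i - \alpha_j\vert \le 2(2\sqrt{M/(\mu N^{1.5})} + c\rho/(\sigma^3\sqrt{N}))$, and the proof requires (i) viewing the empirical Lagrangian as an approximation of a statistical Lagrangian with an $M/\sqrt{N_i}$ generalization bound, (ii) a Berry--Esseen argument for the kernel sums defining $\bar\alpha$, and (iii) a careful sign argument showing $\nabla g_i(\bblam_i^*)^\top(\bblam_i^* - \bblam_s^*) \ge 0$ so that the gradient term can be dropped from the strong-concavity inequality. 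Your sketch omits all three, and your own use of strong concavity has the same unaddressed gradient term: $\nabla g_C(\bblam_C^*)$ does not vanish at a constrained maximizer over $\bblam \ge 0$, only the KKT sign conditions hold, so dropping it needs the same argument. On the positive side, the two obstacles you flag at the end --- the discontinuity of the threshold in \eqref{eqn_alpha_d} at $\bar\alpha^2 = 2\gamma$, and the fact that Hypothesis \ref{h:overlap} provides a fixed $\xi$ rather than one vanishing with $N$ --- are real, and the paper's own proof does not resolve them either; it only obtains a bound of order $\sqrt{\xi}/N^{1.5}$ plus a residual $\xi\sum_{j\ne i}\Vert\bblam_j\Vert_1/N_j$ term and argues informally that agreement on critical samples follows.
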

\begin{proof}
See Appendix \ref{A:Theoreml}.
\end{proof}
In order to understand the proof of this theorem it is necessary to examine the two cases: the case of agents sampling the same space and the case of agents sampling disjoint spaces.
Consider the case in which the two agents are observing completely separate spaces. We assume the two spaces are far apart such that the kernel value for two points in the separate spaces takes on a small value.
\begin{hyp}\label{h:distance}
Let $\calX_i$ and $\calX_j$ be two supspaces of $\calX$ which do not overlap ($\calX_i \cap \calX_j = \emptyset$) and $w\in \calW$. Then for $\xi$ from Hypothesis \ref{h:overlap} the following holds
\begin{equation}\label{eq_assum}
k(\bbx_i, \bbx_j;w) < \xi, \, \forall \, w \in \calW, \bbx_i \in \calX_i, \bbx_j \in \calX_j.
\end{equation}
\end{hyp}
Notice that this hypothesis is similar to Hypothesis \ref{h:overlap} and implies that samples from one subspace do not affect the solution of another subspace. From this assumption we can formulate the following lemma about the global dual function with respect to the local dual functions.

\begin{lemma}\label{lemma_distinct}
Given a group of agents which sample separate spaces as dictated by hypothesis \ref{h:distance}, let $g$ be the global dual function and $g_i$ be the agent dual function for agent $i$. Then
\begin{equation}\label{eqn_dual_distinct}
\vert g(\bblam) - \sum_i g_i(\bblam_i) \vert \leq \frac{2 \xi m L}{N^2}, 
\end{equation} 
for any $\bblam = N [\bblam_1^\top /{N_1}, \dots, \bblam_K^\top/{N_K}]^\top$, where $L = \frac{N^2}{N_1,N_2}\sum_i \sum_{j \neq i} \bblam_i^T \bbJ \bblam_j$ and $m$ is the measure of the support of the function $\alpha_d$. $\bbJ$ is an all-ones matrix. 
\end{lemma}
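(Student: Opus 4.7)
The plan is to write both sides of \eqref{eqn_dual_distinct} in their closed forms and compare them term by term. Applying the derivation leading to \eqref{eq_dual} to each local problem \eqref{eqn_the_agent_problem} yields
\begin{equation*}
g_i(\bblam_i) = -\tfrac{1}{2}\bblam_i^\top \bbQ_i\bblam_i + \tfrac{1}{N_i}\bblam_i^\top(\mathbf{1}-\bbepsilon_i) + m_i,
\end{equation*}
with $(\bbQ_i)_{nm} = (1/N_i^2)\int_{\calC_i} y_n y_m k(\bbx_n,\bbs;w)k(\bbx_m,\bbs;w)\,d\bbs dw$ and $m_i$ the measure of $\calC_i$. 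Substituting the prescribed $\bblam = N[\bblam_1^\top/N_1,\dots,\bblam_K^\top/N_K]^\top$ into \eqref{eq_dual}, the factor $1/N$ in the linear term cancels with the $N/N_i$ scaling of each sub-block, producing $\sum_i \tfrac{1}{N_i}\bblam_i^\top(\mathbf{1}-\bbepsilon_i)$ exactly. Similarly, because Hypothesis \ref{h:distance} guarantees that kernels anchored in $\calX_i$ contribute negligibly to $\alpha_d$ outside $\calX_i$, the global support $\calC$ decomposes as $\cup_i \calC_i$ up to a set of measure $O(\xi)$, so $m(\calX,\calW) - \sum_i m_i = O(\xi m)$.

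The quadratic form is the heart of the argument. Writing the index set of samples as $\{1,\dots,N\} = \bigsqcup_i \calT_i$ and partitioning the global matrix $\bbQ$ into the corresponding blocks, I would split
\begin{equation*}
\bblam^\top \bbQ\bblam \;=\; \sum_i \bblam^\top_{(i)}\bbQ_{ii}\bblam_{(i)} \;+\; \sum_{i\neq j} \bblam^\top_{(i)}\bbQ_{ij}\bblam_{(j)},
\end{equation*}
where $\bblam_{(i)} = (N/N_i)\bblam_i$. For the diagonal blocks, the $(N/N_i)^2$ factor cancels with the global $1/N^2$ appearing inside $\bbQ_{nm}$, leaving $\bblam_i^\top \bbQ_i\bblam_i$ exactly, modulo the difference between integrating over $\calC$ and $\calC_i$, which again is $O(\xi)$ by the support decomposition argument. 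All of these matched pieces therefore cancel the corresponding pieces of $\sum_i g_i(\bblam_i)$.

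What remains is the off-diagonal contribution, which is the main obstacle. For $n\in\calT_i$ and $m\in\calT_j$ with $i\neq j$, I would bound the kernel product pointwise: since the $\calX_l$ are disjoint, for any $\bbs$ in the support one of the two points $\bbx_n\in\calX_i$, $\bbx_m\in\calX_j$ lies in a subspace different from the one containing $\bbs$, so by Hypothesis \ref{h:distance} one of the factors $k(\bbx_n,\bbs;w)$, $k(\bbx_m,\bbs;w)$ is bounded by $\xi$ while the other is bounded by $1$. Integrating gives $|\bbQ_{nm}|\le \xi m/N^2$, and using $|y_n|=1$ together with the scaling $\lambda_n = (N/N_i)\lambda_{i,n}$ yields
\begin{equation*}
\Bigl|\sum_{i\neq j}\sum_{n\in\calT_i,m\in\calT_j} \lambda_n\lambda_m \bbQ_{nm}\Bigr|
\;\le\; \frac{\xi m}{N^2}\cdot\frac{N^2}{N_1 N_2}\sum_{i\neq j}\bblam_i^\top \bbJ\bblam_j \;=\; \frac{2\xi m L}{N^2},
\end{equation*}
where the factor $2$ accounts for counting each unordered pair twice when the outer summation is symmetrized, and $L$ is as defined in the statement.

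Finally, I would collect everything: the linear terms cancel exactly, the diagonal quadratic blocks cancel exactly, the measure terms cancel up to an $O(\xi m / N^2)$ contribution, and the only surviving discrepancy is the off-diagonal quadratic term bounded above. The triangle inequality then delivers \eqref{eqn_dual_distinct}. The principal difficulty in the argument will be the clean bookkeeping of the support sets $\calC_i$ versus $\calC$ and of the $N/N_i$ scalings; once these are controlled, the cross-term estimate is an immediate consequence of Hypothesis \ref{h:distance}.
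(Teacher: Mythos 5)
Your proposal is correct and follows essentially the same route as the paper's proof: write the dual in its closed quadratic form, partition $\bbQ$ into agent-indexed blocks so that the diagonal blocks and the linear terms cancel against $\sum_i g_i(\bblam_i)$ under the prescribed $N/N_i$ scaling, and bound the surviving off-diagonal cross term by $2\xi m L/N^2$ using the kernel-decay hypothesis. If anything you are more explicit than the paper on the two delicate points, namely the pointwise bound on the product $k(\bbx_n,\bbs;w)k(\bbx_m,\bbs;w)$ (one factor at most $\xi$, the other at most $1$) and the additivity of the support measures, which the paper asserts as exact while you correctly flag it as holding only up to an $O(\xi)$ discrepancy.
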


\begin{proof}
See Appendix \ref{A:Lemma_d}
\end{proof}
Notice that the values of the dual variables are weighted by the number of samples which means $\bblam/N = [\bblam_1^\top /{N_1}, \dots, \bblam_K^\top/{N_K}]^\top$. This causes the primal variables to be at the same scale despite being a sum of kernels weighted by the number of samples. Therefore, we can establish the following theorem
\begin{theorem}\label{theorem:separate}
Let $\alpha_C^*$ and $\alpha_i^*$ be the solution to the problem \eqref{eqn_the_central_problem} and \eqref{eqn_the_agent_problem} respectively.
Given that hypotheses \ref{h:distance} and \ref{h:big_gamma} hold, the two solutions converge, as the sample size grows 
\begin{equation}
 \vert \alpha_C^*(\bbs,w) - \sum_i\alpha_i^*(\bbs,w) \vert  \leq \frac{2\sqrt{2 \xi m L}}{N \sqrt{\mu N}}.
\end{equation}
\end{theorem}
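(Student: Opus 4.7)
The plan is to exploit the strong concavity established in Lemma~\ref{strong_concav} together with the decomposition bound of Lemma~\ref{lemma_distinct}. Strong concavity converts closeness of dual \emph{values} into closeness of dual \emph{variables}, while Lemma~\ref{lemma_distinct} supplies exactly the gap we need between the global dual and the sum of the local duals. A linear relation between $\bar{\alpha}$ and the dual variable $\bblam$ then transfers a dual-variable bound to a pointwise bound on the primal.

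Concretely, I would first set $\bblam^{*} = N[\bblam_1^{*\top}/N_1, \ldots, \bblam_K^{*\top}/N_K]^\top$, the scaled concatenation of the local dual optima, chosen so that Lemma~\ref{lemma_distinct} applies and so that $\bar{\alpha}(\bbs,w;\bblam^{*}) = \sum_i \bar{\alpha}_i(\bbs,w;\bblam_i^{*})$ by the calculation underlying that lemma. Applying Lemma~\ref{lemma_distinct} once at $\bblam^{*}$ and once with $\bblam^{*}_C$ (restricted and rescaled into each agent's block) and invoking the optimality $g_i(\bblam_i^{*}) \geq g_i(\hat\bblam_i)$ for the induced $\hat\bblam_i$ yields
\begin{equation}
 0 \;\leq\; g(\bblam^{*}_C) - g(\bblam^{*}) \;\leq\; \frac{4\xi m L}{N^{2}}.
\end{equation}
Since $\bblam^{*}_C$ maximizes $g$, Lemma~\ref{strong_concav} then gives
\begin{equation}
\|\bblam^{*}_C - \bblam^{*}\|^{2} \;\leq\; \frac{2\bigl(g(\bblam^{*}_C) - g(\bblam^{*})\bigr)}{\mu} \;\leq\; \frac{8\xi m L}{\mu N^{2}},
\end{equation}
so that $\|\bblam^{*}_C - \bblam^{*}\| \leq \frac{2\sqrt{2\xi m L}}{\sqrt{\mu}\,N}$. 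Finally, since $\bar{\alpha}(\bbs,w;\bblam)$ is linear in $\bblam$ with coefficients $y_n k(\bbs,\bbx_n;w)/N$ bounded by $1/N$ in magnitude, Cauchy--Schwarz gives
\begin{equation}
\bigl|\bar{\alpha}(\bbs,w;\bblam^{*}_C) - \bar{\alpha}(\bbs,w;\bblam^{*})\bigr| \;\leq\; \frac{1}{\sqrt{N}}\,\|\bblam^{*}_C - \bblam^{*}\| \;\leq\; \frac{2\sqrt{2\xi m L}}{N\sqrt{\mu N}},
\end{equation}
which, upon identifying $\bar{\alpha}(\bbs,w;\bblam^{*})$ with $\sum_i \bar{\alpha}_i(\bbs,w;\bblam_i^{*})$, matches the stated inequality.

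The step I expect to be the most delicate is passing from $\bar{\alpha}$ to the actual primal variable $\alpha^{*}$, which is obtained via the nonlinear threshold in \eqref{eqn_alpha_i_st}. The centralized thresholded solution is not \emph{a priori} equal to the sum of the agent-wise thresholded solutions, because the indicator $\indicator[\bar{\alpha}^{2} > 2\gamma]$ does not commute with summation. The saving grace is Hypothesis~\ref{h:distance}: at any $(\bbs,w)$ lying near the partition $\calX_j$, every $\bar{\alpha}_i$ with $i\neq j$ is $O(\xi)$, so the two thresholding conventions disagree only on a set whose contribution is itself controlled by $\xi$, and the pointwise bound above carries through up to constants already absorbed by the stated rate. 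Making this argument rigorous---separating the bulk region, where both thresholds activate identically, from the boundary region, where $\bar{\alpha}$ is close to $\sqrt{2\gamma}$---is the technical heart of the proof, and is where most of my effort would go.
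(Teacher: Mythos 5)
Your proposal follows essentially the same route as the paper's proof: bound the gap between the global dual and the sum of the local duals via Lemma~\ref{lemma_distinct}, upgrade it to a bound between the optimal dual values by optimality and the triangle inequality, convert that to a bound on the dual variables via strong concavity (Lemma~\ref{strong_concav}), and transfer to the primal through the linearity of $\bar{\alpha}$ in $\bblam$ plus Cauchy--Schwarz, arriving at the same $\frac{2\sqrt{2\xi m L}}{N\sqrt{\mu N}}$ rate. The thresholding subtlety you flag at the end is genuine, but the paper's own proof silently identifies $\alpha^*$ with the unthresholded $\bar{\alpha}$ on a common support, so on that point you are, if anything, more careful than the published argument.
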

\begin{proof}
See Appendix \ref{A:theorem_sep}
\end{proof}
This theorem establishes the relationship between the primal variables of \eqref{eqn_the_agent_problem} and \eqref{eqn_the_central_problem} over non-overlapping areas.
In the case in which agents observe the same space we formulate the following theorem.
\begin{theorem}\label{th_overlap}
Let $\alpha_i^*$ and $\alpha_j^*$ be the optimal variables to problem \eqref{eqn_the_agent_problem} solved by agent $i$ and $j$ respectively. The agents observe samples independently over the same distribution. Further, let $M \geq \Vert f \Vert^2$, $c$ be the Berry-Essen theorem constant,  $\rho = \mathbb{E}_\bbx\left[ \left| \lambda(\bbx) y_\bbx k(\bbx,\bbs;w) \right|^3\right]$ and $\sigma^2 = \mathbb{E}_\bbx\left[ \left| \lambda(\bbx) y_\bbx k(\bbx,\bbs;w) \right|^2\right]$. Let $\mu = min(\mu_i, \mu_j)$ for which $\mu_i$ and $\mu_j$ are the minimum eigenvalue of $\bbQ_i$ and $\bbQ_j$ respectively. Then the difference between the solutions computed by the two agents has the following bound
\begin{equation}
\vert \alpha_i(\bbs,w) - \alpha_j(\bbs,w) \vert \leq   2 \left(2\sqrt{\frac{M}{\mu N^{1.5}}} + \frac{c \rho}{\sigma^3 \sqrt{N}} \right),
\end{equation}
where $N = min(N_i, N_j)$ is the minimum of the two samples sizes.
\end{theorem}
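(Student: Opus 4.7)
The plan is to compare each agent's primal solution against a common ``population'' reference obtained in the $N \to \infty$ limit and then use a triangle inequality. Let $\mathcal{D}$ denote the shared sampling distribution, let $\lambda_\infty(\cdot)$ denote the population dual function (the $N \to \infty$ limit of the agent duals), and define the population primal
\begin{equation*}
\alpha_\infty(\bbs,w) \;=\; \mathbb{E}_{\bbx \sim \mathcal{D}}\!\left[\,\lambda_\infty(\bbx)\, y_\bbx\, k(\bbs,\bbx;w)\,\right].
\end{equation*}
Then $|\alpha_i - \alpha_j| \le |\alpha_i - \alpha_\infty| + |\alpha_\infty - \alpha_j|$ reduces the task to bounding a single agent's deviation $|\alpha_i(\bbs,w) - \alpha_\infty(\bbs,w)|$ by $2\sqrt{M/(\mu N^{1.5})} + c\rho/(\sigma^3\sqrt{N})$, after which doubling and using $N = \min(N_i, N_j)$ yields the stated bound.

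For one agent I would decompose further through an ``oracle'' primal $\hat{\alpha}_\infty(\bbs,w) = (1/N) \sum_{n=1}^N \lambda_\infty(\bbx_n)\, y_n\, k(\bbs,\bbx_n;w)$ that plugs the population duals into the agent's own empirical sum, and write
\begin{equation*}
\alpha_i - \alpha_\infty \;=\; (\alpha_i - \hat{\alpha}_\infty) \;+\; (\hat{\alpha}_\infty - \alpha_\infty).
\end{equation*}
The second piece is a centered empirical average of the i.i.d.\ random variable $\lambda_\infty(\bbx)\,y_\bbx\,k(\bbs,\bbx;w)$, which has variance $\sigma^2$ and absolute third moment $\rho$ by assumption; the Berry--Esseen theorem then controls its deviation at rate $c\rho/(\sigma^3\sqrt{N})$, producing the second term of the bound. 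The first piece is controlled by the $\ell_2$ discrepancy between the empirical dual optimizer $\bblam_i^* \in \reals_+^{N}$ and the plug-in vector obtained by evaluating $\lambda_\infty$ on the agent's samples. By Lemma \ref{strong_concav}, strong concavity of the dual with parameter $\mu$ gives
\begin{equation*}
\|\bblam_i^* - \bblam_\infty\|^2 \;\le\; \frac{2}{\mu}\bigl(g_i(\bblam_\infty) - g_i(\bblam_i^*)\bigr) \;\le\; \frac{2M}{\mu\sqrt{N}},
\end{equation*}
where the last inequality uses $M \ge \|f\|^2$ together with a Berry--Esseen-type bound on the dual objective. Passing back to the primal via the closed-form representation $(1/N)\sum_n \lambda_n y_n k(\bbs,\bbx_n;w)$ in \eqref{eqn_alpha_i_st} and absorbing the $1/N$ normalization from that average gives the $2\sqrt{M/(\mu N^{1.5})}$ contribution.

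Summing the two contributions for each agent and applying the triangle inequality completes the argument. The main obstacle is the first piece: because $\bblam_i^*$ lives on the discrete $N_i$-point sample grid while $\lambda_\infty$ is a function on $\calX$, a careful restriction-and-lifting step is needed for the comparison to be well defined, and we must show uniform-in-$\bblam$ convergence of the empirical dual $g_i$ to its population counterpart at the $1/\sqrt{N}$ rate so that the sub-optimality gap $g_i(\bblam_\infty) - g_i(\bblam_i^*)$ can be absorbed into the $M/\sqrt{N}$ term. Once this uniform-convergence step is secured, the combination with Lemma \ref{strong_concav}, Berry--Esseen, and the triangle inequality is routine.
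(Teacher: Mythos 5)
Your proposal follows essentially the same route as the paper's proof: a triangle inequality through a population reference $\alpha_\infty$, a split of each agent's deviation into (i) the dual-variable discrepancy controlled by strong concavity plus an $M/\sqrt{N}$ uniform bound between the empirical and statistical Lagrangians, converted to the primal via Cauchy--Schwarz on the closed-form $\frac{1}{N}\sum_n \lambda_n y_n k(\bbs,\bbx_n;w)$, and (ii) an empirical-mean-versus-expectation term controlled by Berry--Esseen. Aside from a sign-convention slip in the strong-concavity inequality (for a maximized concave dual the suboptimality gap is $g_i(\bblam_i^*)-g_i(\bblam_\infty)\ge 0$, not the reverse) and a slightly different constant, this matches the paper's argument, including the plug-in vector $\lambda_s^*$ and the uniform-convergence step the paper handles by citation.
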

\begin{proof}
See Appendix \ref{A:th_overlap}
\end{proof}
Because agents sample the same space, as the sample size grows the solutions from two agents converge. In fact, if the solutions of two agents are reciprocally feasible, then they are equal (Lemma \ref{lem_dif}). The centralized learner can be viewed in this case as an agent which collects more samples therefore the solution of an agent converges to that of the centralized learner as well.

\begin{lemma}\label{lem_dif}
Given two problems as in \eqref{eqn_the_agent_problem} with different sample sets, let $P_1$ and $P_2$ be the solutions to these problems  If the two solutions $P_1$ and $P_2$ are reciprocally feasible, that is if the optimal variable $\alpha_1^*$ is feasible to the second problem and vice versa then the two solutions are equal:

\begin{equation}
P_1 = P_2.
\end{equation}
\end{lemma}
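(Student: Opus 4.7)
The plan is to exploit the fact that in problem \eqref{eqn_the_agent_problem} the objective function (the elastic net cost on the coefficient function $\alpha$) does not depend on the training set at all; the data enters only through the inequality constraints $\ell(f(\bbx_n),y_n) \leq 0$. Thus the two problems share an identical objective $J(\alpha) := \tfrac{1}{2}\|\alpha\|_{L_2}^{2} + \gamma \|\alpha\|_{L_0}$ and differ only in their feasible sets, which I will call $\mathcal{F}_1$ and $\mathcal{F}_2$. The reciprocal-feasibility hypothesis says precisely that $\alpha_1^* \in \mathcal{F}_2$ and $\alpha_2^* \in \mathcal{F}_1$.

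From here the argument is a short sandwich. First I would write $P_1 = J(\alpha_1^*)$ and $P_2 = J(\alpha_2^*)$ using optimality of $\alpha_i^*$ for its own problem. Next, since $\alpha_1^* \in \mathcal{F}_2$ by assumption, it is an admissible point for the second problem, so by definition of $P_2$ as the infimum of $J$ over $\mathcal{F}_2$ one has $J(\alpha_1^*) \geq P_2$, i.e., $P_1 \geq P_2$. Swapping the roles of the two problems and using $\alpha_2^* \in \mathcal{F}_1$ gives the reverse inequality $P_2 \geq P_1$. Combining the two yields $P_1 = P_2$.

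There is no real obstacle: the lemma is essentially a ``same objective $+$ mutually admissible optima'' observation, and nothing beyond the definitions of problem \eqref{eqn_the_agent_problem} and of feasibility is needed. The only subtlety worth flagging in the write-up is that $\alpha_1^*$ and $\alpha_2^*$ need not coincide as functions, so the conclusion is a statement about optimal values rather than about optimal variables; this is exactly what the lemma claims and what is used later, since Theorems~\ref{theorem:separate} and \ref{th_overlap} only need the values of the primal to be the same in order to chain the dual-domain bounds into a statement about $\alpha$ through Lemma~\ref{strong_concav}.
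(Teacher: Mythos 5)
Your proposal is correct and follows essentially the same route as the paper: both rest on the observation that the two problems share the identical objective $f_0(\alpha)$ and differ only in their feasible sets, so reciprocal feasibility of the optima yields $P_2 \leq f_0(\alpha_1^*) = P_1$ and $P_1 \leq f_0(\alpha_2^*) = P_2$. If anything, your direct two-inequality sandwich is a cleaner write-up than the paper's somewhat tangled proof-by-contradiction phrasing of the same idea.
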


\begin{proof}
Let us first notice that the objective function is independent of sample size and is therefore equal for both problems let us denote it as

\begin{equation}
f_0(\alpha) = \int \frac{1}{2} \alpha^2(s,w) + \gamma \mathbb{I}\left( \alpha(s,w) \neq 0 \right) ds \, dw.
\end{equation}
Suppose now that $P_2 > P_1$ this implies that there exists an $\alpha_1$ which is feasible in the second problem such that $f_0(\alpha_1) = P_1 < P_2$, however since $P_2 = \underset{\alpha}{\min} f_0(\alpha)$ by definition, it follows that $P_2 \leq P_1$. This implies that there exists an $\alpha_2$ which is feasible in the first problem such that  $f_0(\alpha_2) = P_2 < P_1$. However, $P_1$ is by definition optimal and therefore it must be that $P_1 = P_2$.  

\end{proof}

\begin{remark} \label{rem_gamma}
In Hypothesis \ref{h:big_gamma} we make an assumption about the parameter $\gamma$ being chosen such that the space $\calC$ is rich enough to assure strong concavity of the dual problem in Lemma \ref{lem_dif}. The strong concavity is guaranteed by the optimal $\alpha^*$ having a non-zero support measure. In theory, as long as the zero function, $f(x) = 0$ is not feasible because of the constraints, $alpha$ is guaranteed to have non-zero support. Increasing parameter $\gamma$ shrinks the support of $\alpha$ and consequently reduces the value of the strong concavity parameter $\mu$. Although, the dual problem still has strong concavity, the lower value of $\mu$ makes it more difficult to solve and therefore the dual problem algorithm requires more iterations to converge. 
\end{remark}

\section{Experimental Results}
\label{sec:Num}
In the previous section, we have proposed a federated learning model which (i) reduces the necessary communication complexity and (ii) converges to the omniscient unit solution as the sample size grows. In this section, we first show through a simulated signal that the solution of our \eqref{eqn_the_feder_problem} converges to that of \eqref{eqn_the_central_problem} with as the sample size grows. Then using an activity identification task, we demonstrate that our algorithm can significantly reduce the communication cost to the central unit without compromising the performance of the classification.
For the classification task, we use the family of RKHSs with Gaussian kernels
\begin{equation}
k(\bbx,\bbx^\prime) = \exp \left\lbrace \frac{-\Vert \bbx - \bbx^\prime \Vert^2 }{2 w^2} \right\rbrace.
\end{equation}
The width of the kernel is directly proportional to the hyper-parameter $w$.

To start, the effect of sample size on the generalization accuracy is examined on a simulated data set. To this end, we simulate a uniformly distributed signal and define the class membership for each sample as
\begin{equation}
y = \begin{cases} 1,  & (\bbx' - \bbc_i)^\top \bbA (\bbx' - \bbc_i) \leq r_i  ~\text{for any}~ i \\
-1 & otherwise,
\end{cases}
\end{equation}
for which $r_1 = 9$, $r_2 = 30$, $\bbc_1 = [3,0]^\top$, $\bbc_2 = [-10,6]^\top$ and $\bbA = [1,0;0,0.25]$. The space $\calX$ was divided into $9$ overlapping subspaces. Each agent collects data from only one subspace and forms its local model. There were $9$ subspaces from which the agents collect the data. The setup of subspaces and class labels can be seen in Figure \ref{fig:setup}.
After samples are assigned to a class, Gaussian noise is added to the samples $\bbx = \bbx'+ \xi$, where $\xi \ \in  \calN(0,0.2)$ in order to create noisy samples. A separate testing set of  $1000$ samples is created for the evaluation of the learner. 

The performance of the federated learner \eqref{eqn_the_feder_problem} is compared to that of the centralized learner \eqref{eqn_the_central_problem}. Both methods used $\gamma = 25$, $\epsilon = 10^{-2}$ and a learning rate of $\eta = 0.1$. 
Figure \ref{fig:sample} compares the generalization accuracy of the two learners over training sample sizes ranging from $90$ to $900$. The average generalization accuracy was calculated over 100 repetitions. When the sample size is below $400$ the federated classification learner has a better generalization accuracy. This is most likely due to the agents being able to learn a simpler problem despite having a small training set.
 As the sample size grows the two solutions converge which is reflected by the generalization accuracy converging. 
\begin{figure}
\centering
\includegraphics[width=0.8\linewidth]{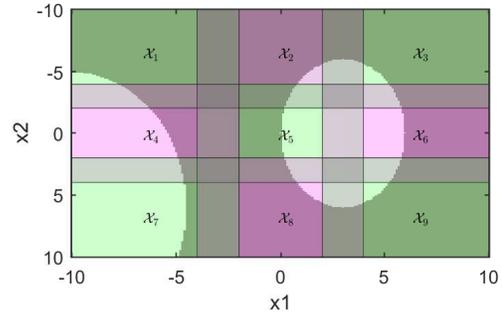}
\caption{The simulated space $\calX$. The subspaces sampled by each agent are colored either purple or green with the gray spaces being sampled by multiple agents. The class membership is determined by the brightness: the bright areas belong to class +1, and the darker areas belong to class -1.}
\label{fig:setup}
\end{figure}
\begin{figure}[tb]
  \centering
  \centerline{\includegraphics[width=0.8\linewidth]{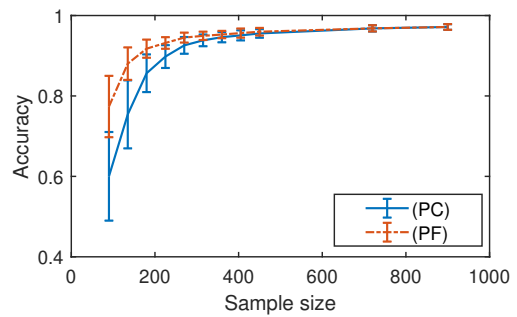}}
\caption{The average accuracy taken over 100 repetitions of randomized sampling of the federated learner \eqref{eqn_the_feder_problem} and that of the centralized learner \eqref{eqn_the_central_problem} as a function of sample size.}
\label{fig:sample}
\end{figure}

\subsection{Task Classification}
We further evaluate our methods using biometric data \cite{weiss2019smartphone} containing measurements from the accelerometer sensor from a smartphone. The study contained measurements from participants while performing various tasks, such as jogging, walking, writing, and typing.

The smartphone of each participant is considered an agent collecting data over its distribution. The agents collect data over different spaces since people don't perform the same activity the same way, e.g. some people walk faster, some people type slower, some write cursive, etc. Similarly, since all participants perform the same task the spaces from which the agents are collecting the data should not be distinct.

We examine the effect of the number of agents on the performance of our federated learner for the classification of running versus jogging. Agents are selected randomly from our training set and the data from each agent is randomly split into a training and a testing set. The time series from the phone's accelerometer is divided into $5$ second intervals, with each interval considered a sample. The average value is taken such that each sample contains three features. Then we train our federated learner and the centralized learner and compare the accuracy on the test set. Both learners use the following parameters: $\gamma = 100$, $\eta = 0.1$, $T = 1000$ and $\epsilon = 0.5$. This procedure was repeated $100$ times in order to obtain average performance. The federated learner \eqref{eqn_the_feder_problem} and the centralized learner \eqref{eqn_the_central_problem} have comparable average accuracy. When the number of agents is increased to $51$ agents the average accuracy of the federated learner is $77.35\%$ and the average accuracy of the centralized learner is $75.29\%$.
\begin{figure}
\centering
\includegraphics[width=0.8\linewidth]{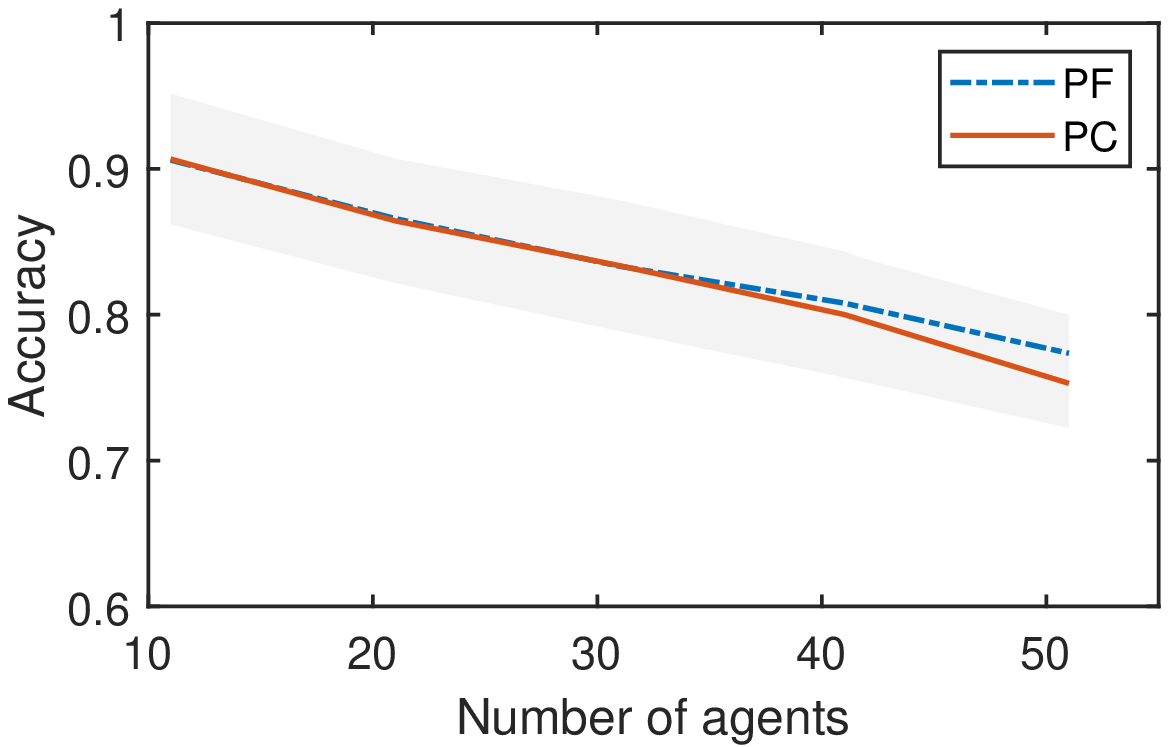}
\caption{The performance of the the federated learner and the centralized learner as a function of the number of agents.}
\label{fig:NumAgents}
\end{figure}

The effect of the number of agents is evaluated on a second task: typing and writing. The learners use the following parameters : $\gamma = 150$, $\eta = 0.1$, $T = 500$ and $\epsilon = 0.5$. In this case, the average accuracy decreases as the number of agents increases for both the federated learner and the centralized learner. The performance of the learners could potentially be improved by increasing the parameter $\epsilon$ for a larger number of agents.

\begin{figure}
\centering
\includegraphics[width=0.8\linewidth]{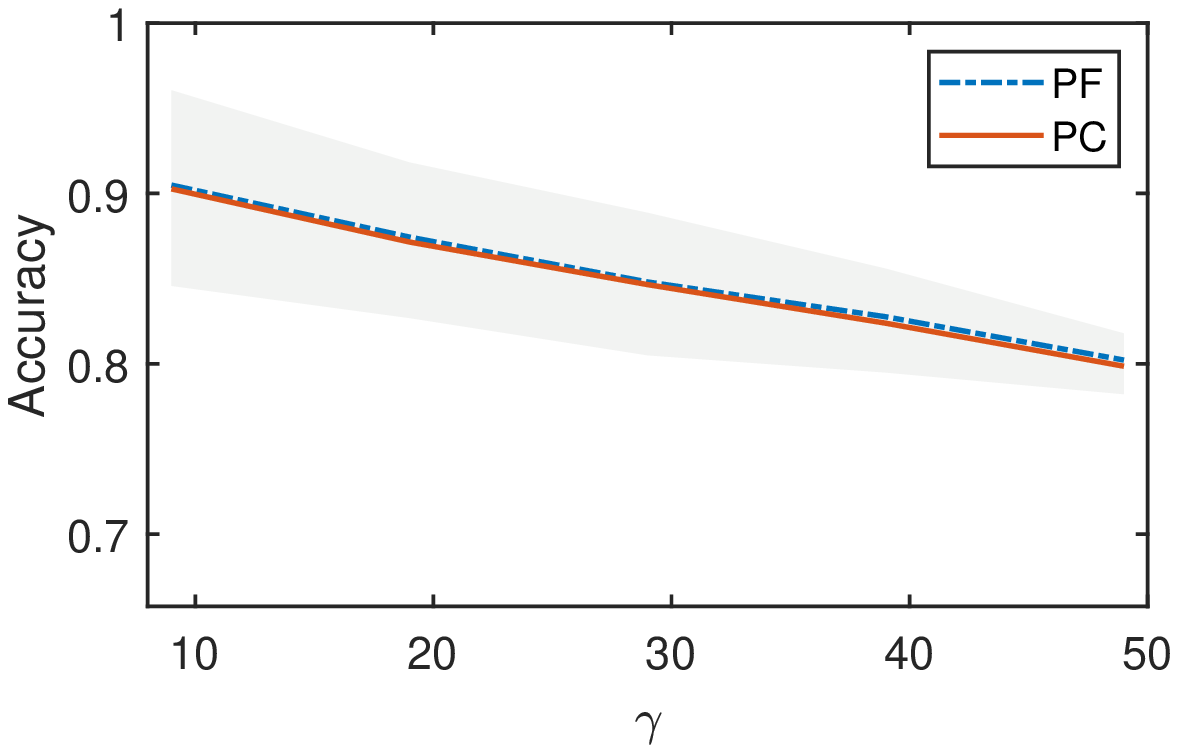}
\caption{The performance of the the federated learner and the centralized learner as a function of the number of agents.}
\label{fig:NumAgents_WrvsT}
\end{figure}

Next, we examine the effect of the sparsity parameter $\gamma$ on the performance of the learners. Data from $10$ participants is used to distinguish between the activities of walking and jogging. The regularizing parameter $\gamma$ which controls the complexity of the representation was varied to observe the effects on three metrics: accuracy of classification, the cost of communication, and the cost of the representation. The accuracy is measured as the percentage of correctly classified tasks. The communication cost is measured as the average number of samples that need to be transmitted over the channel. The representation cost is determined by the number of kernels used in the resulting global model.

The features were extracted from averaging over $5$ second intervals. The data is randomly split to create a training set and a test set. The accuracy is evaluated on the test set. The parameters used by both agents are:  $\eta = 0.1$, $T = 500$ and $\epsilon = 0.5$. The federated learner and the centralized learner are trained over $100$ random splits and the resulting accuracy, representation cost, and communication cost is averaged over those repetitions. The average accuracy does not change for either learner with respect to the sparsity parameter and both learners have similar performance Figure \ref{fig:class} (a). This is expected since the algorithm can produce representations of varying complexity without sacrificing performance. The average representation cost of the global model is inversely proportional to the sparsity parameter. Both learners achieve similar representation costs as can be seen in Figure \ref{fig:class} (b). The communication cost of the federated learner is directly proportional to the sparsity parameter. This is expected since low complexity representations for $\alpha(\bbs,w)$ require more intricate kernel functions and therefore more samples. Therefore, there exists a trade-off between the complexity of the representation of the global model and the communication cost of sending data over the network. If sparsity of the global model is not a concern the federated learner can achieve a communication cost that is $40\%$ of the communication cost of the centralized learner (Figure \ref{fig:class} (c)).

\begin{figure}[htb]     
      \begin{minipage}[b]{\linewidth}
      \centering
              \includegraphics[width=0.8\linewidth]{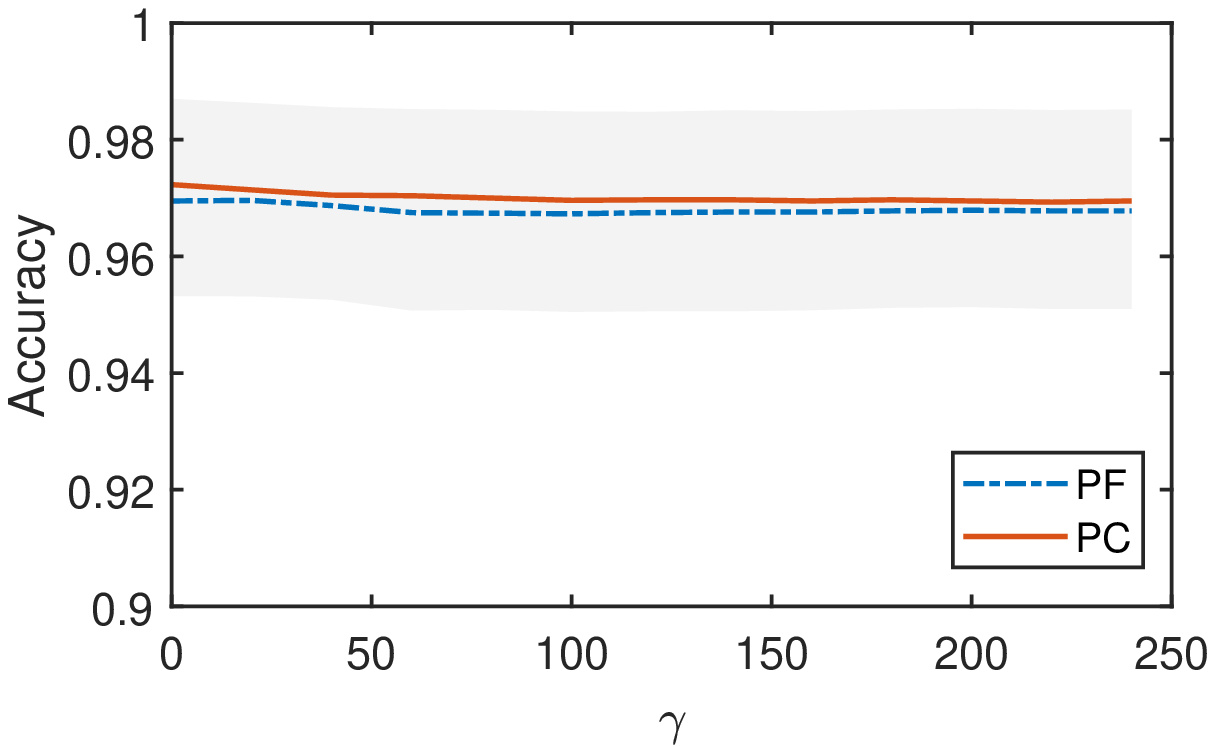}
              \centerline{(a)}
      \end{minipage}
      \begin{minipage}[b]{\linewidth}
      \centering
              \includegraphics[width=0.8\linewidth]{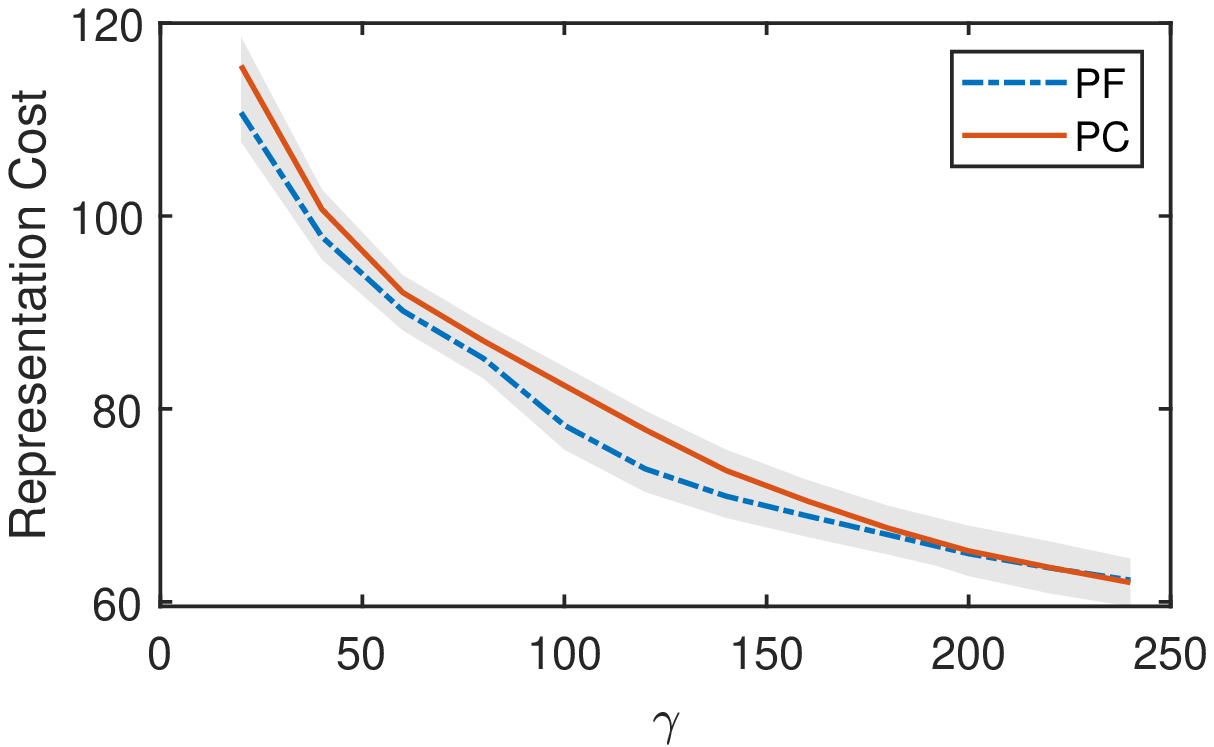}
               \centerline{(b)}
      \end{minipage}
      \begin{minipage}[b]{\linewidth}
      \centering
              \includegraphics[width=0.8\linewidth]{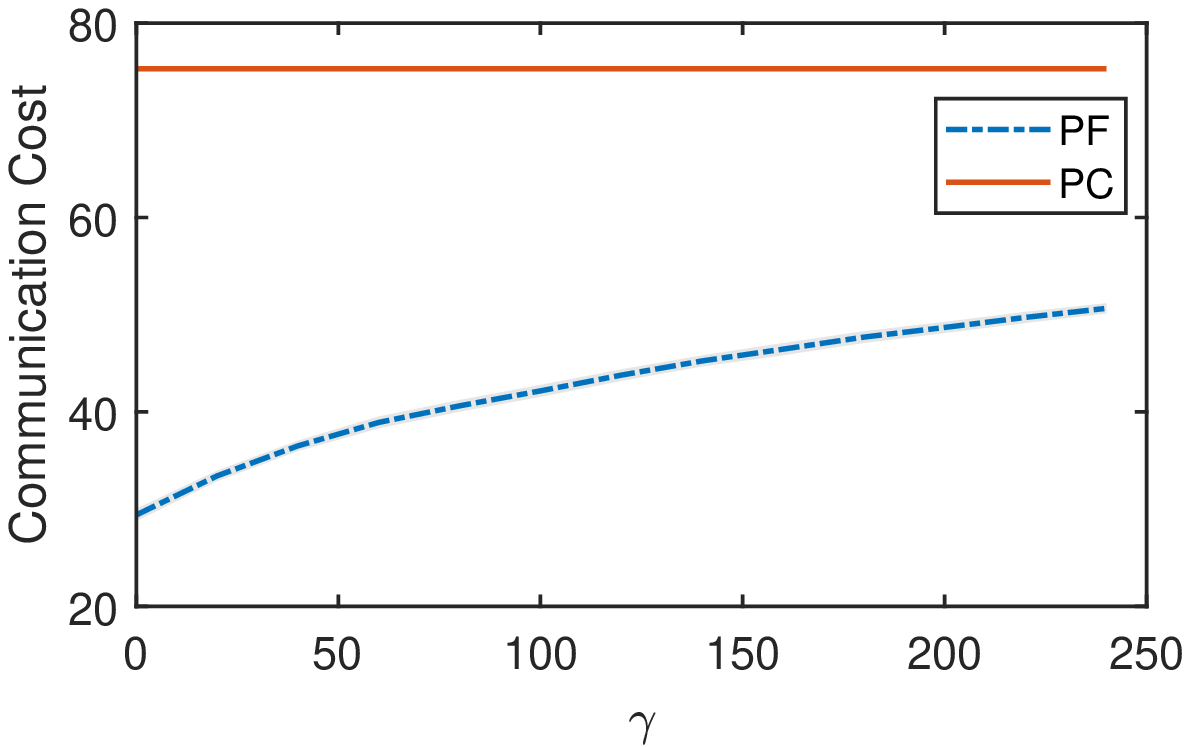}
               \centerline{(c)}
      \end{minipage}
      
      \caption{Classification of walking versus running using the federated learner and the centralized learner. (a) The accuracy of the federated classification learner and the centralized learner as a function of the sparsity parameter. (b) The representation cost of both learners as a function of the sparsity parameter. (c) The communication cost of transmitting data to the central unit for the federated learner and the centralized learner as a function of the sparsity parameter.}
\label{fig:class}
  \end{figure}

 We further validated our method by examining the problem of classification of writing and typing with data acquired from the phone accelerometer \cite{weiss2019smartphone}. The features are obtained by averaging over a $5$ second time window. The performance of our federated learner was compared to that of the centralized learner on the three metrics: accuracy, communication cost, and representation cost. The data from the agents was split randomly using $100$ repetitions. The parameters used by both agents are:  $\eta = 0.1$, $T = 500$ and $\epsilon = 0.5$. The sparsity parameter $\gamma$ was varied between $0$ and $240$.

The federated learner \eqref{eqn_the_feder_problem} and the centralized learner \eqref{eqn_the_central_problem} have similar accuracy and their performance is not affected by the sparsity parameter. This implies that the functions needed to represent the class difference are sufficiently sparse. The sparsity parameter controls the complexity of the representation which can be seen in the representation cost (Figure \ref{fig:WrvT}, (b)). Both learners achieve similar representation costs. The advantage of the federated learner comes from reducing the communication cost Figure \ref{fig:WrvT} (c). When sparsity is not required the federated learner achieves a reduction of $64\%$ in communication cost.

  \begin{figure}[htb]
      \begin{minipage}[c]{\linewidth}
      \centering
              \includegraphics[width=0.8\linewidth]{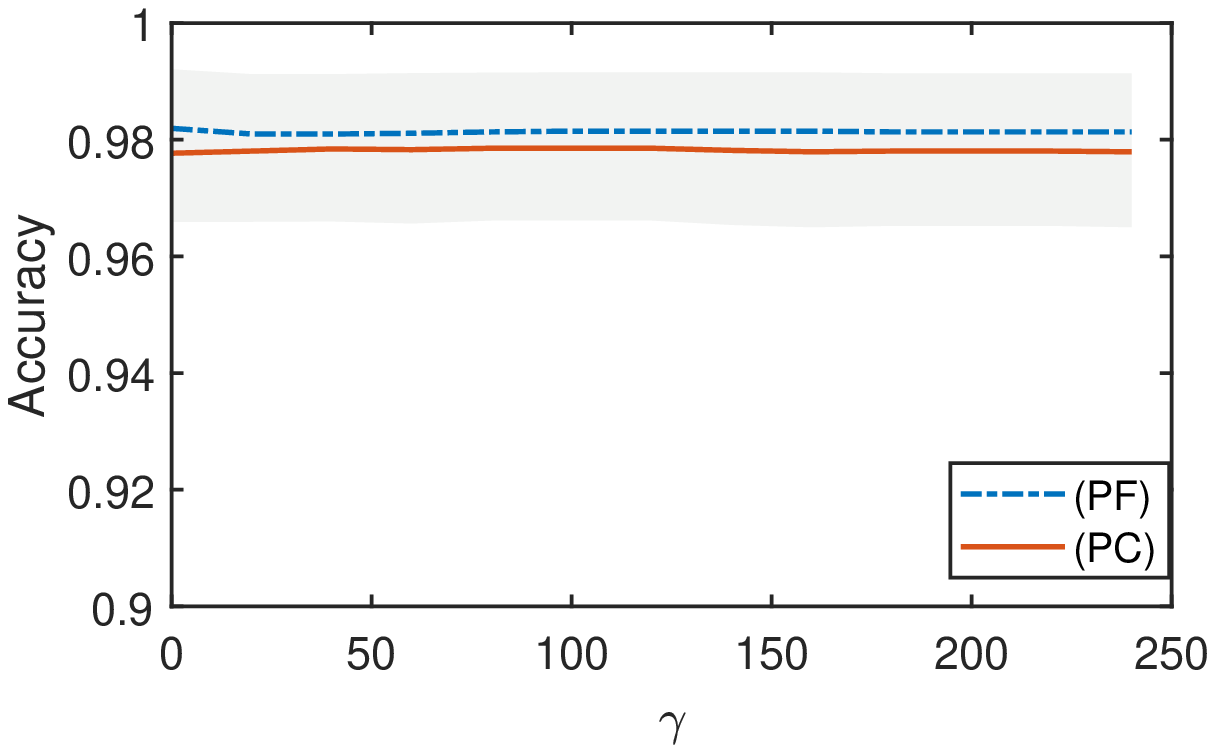}
              \centerline{(a)}
      \end{minipage}
      \begin{minipage}[c]{\linewidth}
      \centering
              \includegraphics[width=0.8\linewidth]{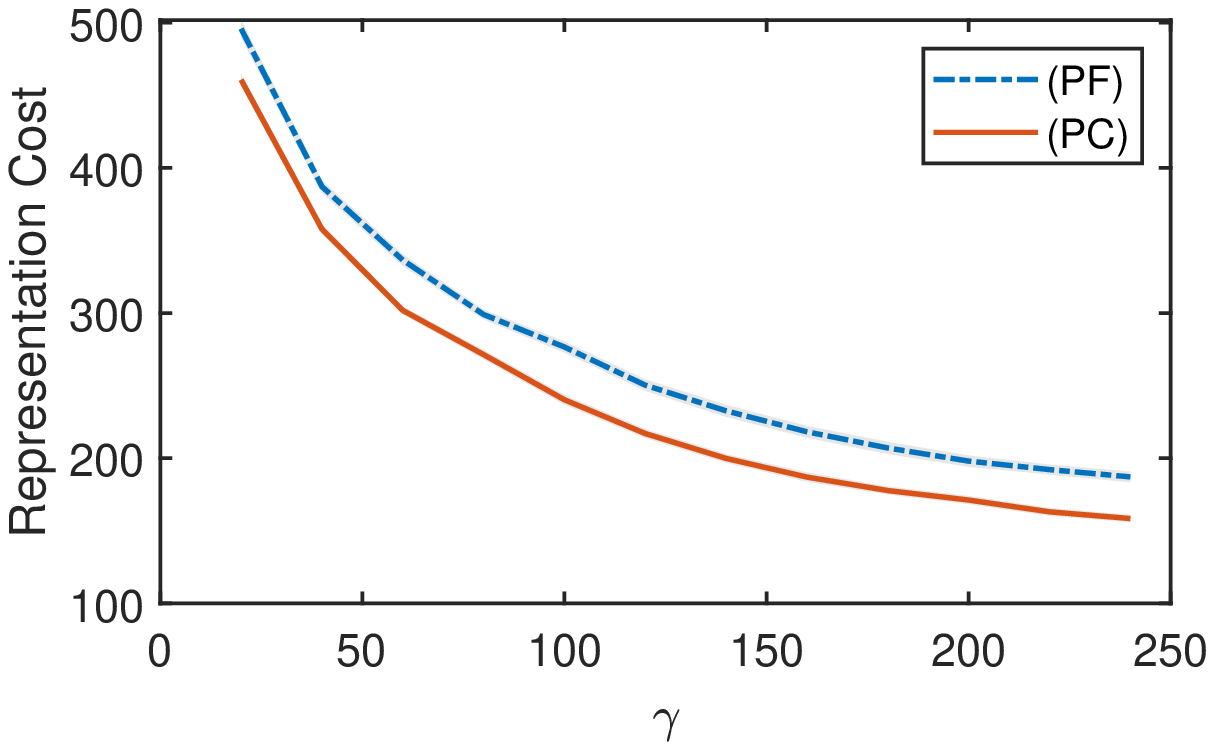}
               \centerline{(b)}
      \end{minipage}
      \begin{minipage}[c]{\linewidth}
      \centering
              \includegraphics[width=0.8\linewidth]{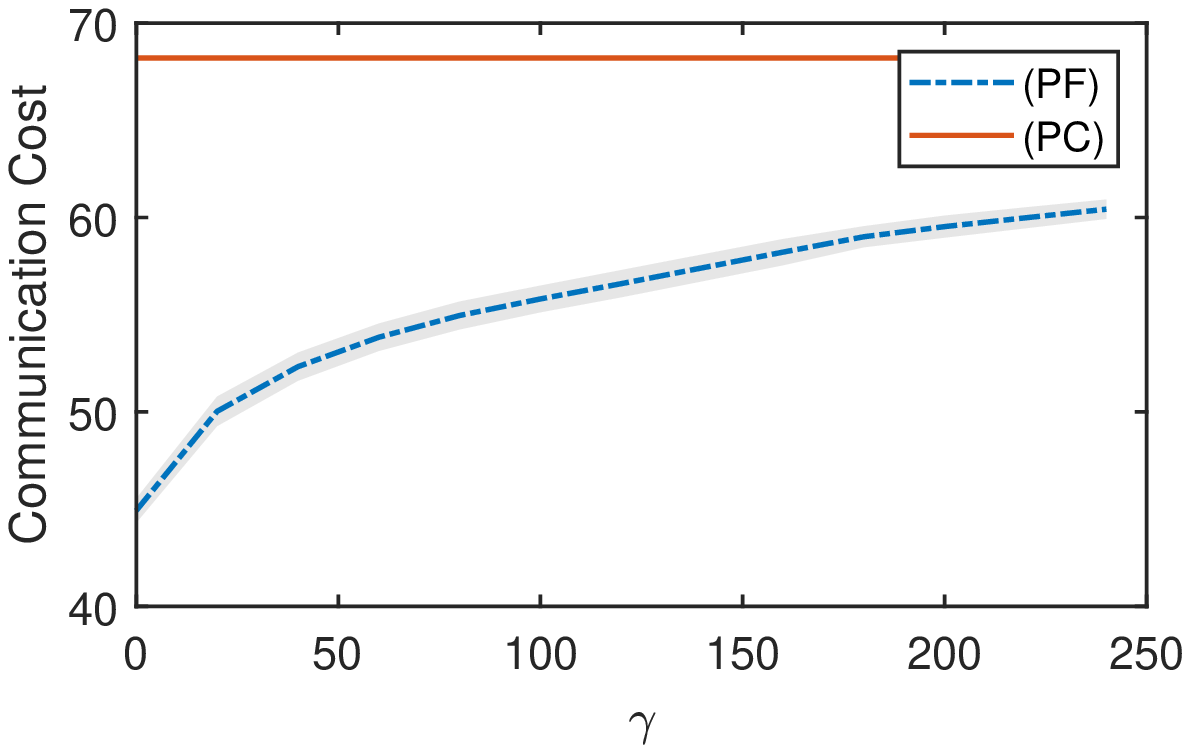}
               \centerline{(c)}
      \end{minipage}
      \caption{Classification of writing versus typing using the federated learner and the centralized learner. (a) The accuracy of the federated classification learner and the centralized learner as a function of the sparsity parameter. (b) The representation cost of both learners as a function of the sparsity parameter. (c) The communication cost of transmitting data to the central unit for the federated learner and the centralized learner as a function of the sparsity parameter.}
\label{fig:WrvT}
  \end{figure}

\section{Conclusion}
This paper introduced a method for federated classification using low complexity RKHS representations. This was achieved by first introducing a method for traditional learning, which obtains both a sparse representation and identifies the critical samples for the classification problem. By leveraging the ability to detect the critical samples to the classification problem our federated learner is able to reduce traffic over the network and send less information. The federated classification method was shown to converge to a traditional learning method in which the learner has access to the entire data set as the sample size grows. The federated learner was used in a task recognition problem for which the data was collected from users' phones. In our numerical experiments, our method significantly reduced the communication cost while maintaining a similar accuracy and complexity of representation to the centralized learner.

\appendices

\section{Proof of Lemma \ref{lem_zeros}}\label{A:Fund}
\begin{proof}
For this proof we will establish the following notation in order to make the proof easier to read:
\begin{equation}
\begin{aligned}
k_n = k(x_n,s;w);
\end{aligned}
\end{equation}

The first term of dual function can be rewritten as:

\begin{equation}\label{eqn_1lem_zeros}
\begin{aligned}
\bblam^\top \bbQ \bblam =  \frac{1}{N^2}\int \sum_n \sum_m \lambda_n \lambda_m k_n k_m y_n y_m d\bbs \, dw \\
= \frac{1}{N^2}\int \left( \lambda_n y_n k_n + \sum_{m \neq n} \lambda_m y_m k_m \right)^2 ds \, dw \\
= \bblam'^\top \bbQ' \bblam' + \frac{1}{N^2} \left( 2 \lambda_n y_n k_n \sum_{m \neq n} \lambda_m y_m k_m + \lambda_n^2 k_n^2 \right)
\end{aligned}
\end{equation}
where $\bblam'$ and $\bbQ'$ are the variables $\bblam$ without the $n^{th}$ element and $\bbQ$ without the $n^{th}$ row and column respectively. Using \eqref{eqn_1lem_zeros} we can rewrite the dual function:

\begin{equation}\label{eqn_zeros}
\begin{aligned}
g(\bblam) = -0.5 \bblam'^\top \bbQ' \bblam' + \frac{1}{N}\bblam'^\top (1-\epsilon) + \gamma m(\ccalX, \ccalW) \\
 +\frac{1}{N} \lambda_n  \left(1 -\epsilon - y_n \frac{1}{N}\int \sum_{m \neq n} \lambda_m y_m k_m k_n ds \, dw \right) \\
 - \frac{0.5}{N^2}\int \lambda_n^2 k_n^2 ds \, dw
 \end{aligned}
\end{equation}
Notice that $(1/N)\int \sum_{i \neq n} \lambda_i k_i k_n d\bbs dw$ evaluated at the optimal $\bblam'$ is precisely $\hat{y}_n$ considering $\bblam'_m = \bblam_m(N-1)/N)$ for all $m$ and $g(\bblam| \bblam_n = 0) = g((N/(N-1) \bblam')$. Since, $\bblam_n = 0$ it follows from complementary slackness that $1-\epsilon-y_n \hat{y}_n < 0$. Therefore, it follows that the optimal values for the two dual functions are equal if $\lambda_n^* = 0$. Moreover, the optimal primal variables are equal, i.e., $\alpha^*(\bbs,w) = \alpha'^*(\bbs,w)$. This concludes the first part of the proof.

Next, we will show that if a model that is optimal for $X'$ and that has the property $1-\epsilon - y_n \hat{y}_n <0$ for a new sample $\bbx_n$, the optimal dual variable corresponding to that point for the model trained on the set $X = X' \cup \lbrace x_n \rbrace$ has value $\bblam_n^* = 0$.
Equation \eqref{eqn_zeros}  implies that optimizing for the variable $\bblam_n$, given the solution to the model using $X'$ results in $\lambda_n = 0$. This value maximizes the dual function $g(\bm{\lambda})$. It is necessary to prove that there is not a value for $\bm{\lambda}$ different from $\bm{\lambda}'^*$ for which $g(\bm{\lambda}'^*) < g(\bm{\lambda}^*)$.
Since $1 -\epsilon -y_n \hat{y}_n < 0$, the optimal $\alpha'^*$ is feasible for the model which uses $x_n$ as a sample as well and it has not been proven yet to be optimal for the full set we can say $P'^* \geq P^*$. However, since we have strong duality it is also true that 

\begin{equation}
g(\bm{\lambda}'^*) = P'^* \geq P^* = g(\bm{\lambda}^*)
\end{equation}
Since $g(\bm{\lambda}^*)$ is the maximum over $\bm{\lambda}$ it follows that  $g(\bm{\lambda}'^*) = g(\bm{\lambda}^*)$, which implies that $\lambda_n =0$. This concludes the proof.

\end{proof}

\section{Proof of Theorem~\ref{th_part}}
\label{A:Theoreml}
\begin{proof}
Given two data sets $\bbX_i$ and $\bbX_j$ drawn over partitions of the space $\calX = \calX_i \cup \calX_j$, let $\alpha^*(\bbs,w)$ be the solution to the problem \eqref{eqn_the_central_problem} given $[\bbX_i,\bbX_j]$ as a training set and, $\alpha_{(i)}^*(\bbs,w)$ and $\alpha_{(j)}^*(\bbs,w)$ be the solution to the problem \eqref{eqn_the_agent_problem} trained on $\bbX_i$ and $\bbX_j$ respectively.
Additionally, let the overlap be large enough that Hypothesis \ref{h:distance}  holds for the non-overlapping spaces.
Let $\calX_o = \calX_i \cap \calX_j$ be the overlapping space and $\calX_i' = \calX_i \setminus \calX_o$, $\calX_j' = \calX_j \setminus \calX_o$.
Then we can write the $\alpha(\bbs,w)$ as a sum of functions which are nonzero only over one space where $\alpha_i(\bbs,w) = 0 ~\forall ~\bbs \notin \calX_i'$, $\alpha_j(\bbs,w) = 0 ~ \forall ~ \bbs \notin \calX_j'$ and  $\alpha_o(\bbs,w) = 0 ~\forall~ \bbs \notin \calX_o$
\begin{equation}
\alpha(\bbs,w) = \alpha_i(\bbs,w) + \alpha_j(\bbs,w) + \alpha_o(\bbs,w)\
\end{equation}
it follows from Theorem \ref{th_overlap} that the each $\alpha_{(j),o}(\bbs,w)$ and $\alpha_{(i)o}$ converge to each other as the number of samples grows 
\begin{equation}
\vert \alpha_{(j)o}^* - \alpha_{(i)o}^*\vert \leq 2 \left(2\sqrt{\frac{M}{\mu N^{1.5}}} + \frac{c \rho}{\sigma^3 \sqrt{N}} \right)
\end{equation}
As the sample size grows the functions $f_i$ and $f_j$ over the overlapping space $\calX_o$ converge and therefore if for a point $\bbx \in \calX_o$, if $1- \epsilon_\bbx -yf_i(\bbx) <0$ then it must also hold that $1 - \epsilon_\bbx - y f_j(\bbx)<0$. Because the agents sample over the overlapping area, as the sample size grows and the agents agree on the critical samples, they will also agree with the centralized learner on the critical samples. Then according to Lemma \ref{lem_zeros} the solution of \eqref{eqn_the_feder_problem} and \eqref{eqn_the_central_problem} will converge over $\calX_o$. Although this was illustrated for two agents, the proof holds for any number of agents.

Over the spaces which do not overlap, consider \eqref{eqn_the_agent_problem} trained on $\bbX_i$ and \eqref{eqn_the_central_problem} trained on $\bbX$ and let  $\alpha_i^*$ and $\alpha_{(i)i}^*$ be their respective optimal values. Then for $ \bbs \in \calX_i'$ we can establish the following
\begin{equation}
    \begin{aligned}
    &\vert \alpha_{i}(s,w)^* - \alpha_{(i)i}^*(s,w)\vert \leq \\
&\left\vert \alpha^*(\bbs,w) - \sum_i \alpha_{(i)}^*(\bbs,w) \right\vert  + \vert \sum_{j \neq i} \alpha_{(j)}(s,w) \vert \\ 
&\leq \frac{2\sqrt{2 \xi m L}}{N \sqrt{\mu N}} + \sum_{j \neq i} \vert\alpha_{(j)}(s,w) \vert\\
&\leq \frac{2\sqrt{2 \xi m L}}{N \sqrt{\mu N}} + \xi \sum_{j\neq i} \frac{\Vert \bblam_j\Vert_1}{N_j}.
    \end{aligned}
\end{equation}
Notice that $\alpha_j(\bbs,w)$ has little effect on the value of $f(\bbx)$ for $\bbx \in \bbX_i$. As the sample size grows, if $1- \epsilon_\bbx yf_i(\bbx) <0$ then it must also hold that $1- \epsilon_\bbx yf(\bbx) <0$, where $f_i$ and $f$ are the solutions found by agent i and the centralized learner respectively. Then according to Lemma \ref{lem_zeros} as the sample size grows the agents and the centralized learner agree on the critical samples and the federated learner \eqref{eqn_the_feder_problem} and the centralized learner \eqref{eqn_the_central_problem} solve more similar problems.  
\end{proof}

\section{Proof of Lemma~\ref{lemma_distinct}}\label{A:Lemma_d}
\begin{proof}
We first show that it is true for two agents and then expand it for multiple agents. Let $\bblam_1$ be the dual  Recall the dual function \eqref{eq_dual} is a quadratic function 
\begin{equation}
    g(\bblam) = -0.5 \bblam^\top \bbQ \bblam
+ \frac{1}{N}\bblam^\top (\textbf{1} - \bbepsilon) + m(\ccalX, \ccalW), 
\end{equation}
with 
\begin{equation}
    \bbQ_{nm} = \frac{y_n y_m}{N^2}\int_{\ccalX \times \ccalW} k(\bbx_n, \bbs; w) k(\bbx_m, \bbs; w) d\bbs \, dw
\end{equation}
The matrix $\bbQ$ can be divided into sub-matrices based on the agents, to which the kernels centers belong:
\begin{equation}
\bbQ = \begin{bmatrix} 
\bbQ_{11} & \bbQ_{12} \\
\bbQ_{21} & \bbQ_{22} 
\end{bmatrix},
\end{equation}
for which 
\begin{equation}
\begin{aligned}
 \bbQ_{ij(nm)} &= \frac{1}{N^2}\int_{\ccalX \times \ccalW} k(\bbx_n^{(i)}, \bbs; w) k(\bbx_m^{(j)},  \bbs; w) d\bbs \, dw, \\
 &\bbx_n^{(i)} \in \bbX_i.
\end{aligned}
\end{equation}
Then notice that 
\begin{equation}
\begin{aligned}
    \bblam^\top \bbQ \bblam &= \left(\frac{N}{N_1}\bblam_1^\top\right) \bbQ_{11} \left(\frac{N}{N_1}\bblam_1 \right)  \\
    &+ \left(\frac{N}{N_2}\bblam_2^\top\right) \bbQ_{22} \left(\frac{N}{N_2}\bblam_2 \right) \\
    &+ 2 \left(\frac{N}{N_1}\bblam_1^\top\right) \bbQ_{12} \left(\frac{N}{N_2}\bblam_2 \right) \\
    &= \bblam_1^\top \bbQ_1 \bblam_1 + \bblam_2^\top \bbQ_2 \bblam_2 \\
    &+ 2 \left(\frac{N}{N_1}\bblam_1^\top\right) \bbQ_{12} \left(\frac{N}{N_2}\bblam_2 \right)
    \end{aligned}
\end{equation}
Additionally, the measure of the support of the dual function is equal to the sum of the measures of the individual agents
\begin{equation}\label{eq_support}
 m(\ccalX, \ccalW) = m(\ccalX_1, \ccalW) + m(\ccalX_2, \ccalW).
\end{equation}
Then we can conclude the following
\begin{equation}
\begin{aligned}
&\left| g(\bblam) - (g_1(\bblam_1) + g_2(\bblam_2)) \right| \\
&=  \left|2 \left(\frac{N}{N_1}\bblam_1^\top\right) \bbQ_{12} \left(\frac{N}{N_2}\bblam_2 \right) \right|\\
& = \frac{2}{N_1 N_2}\bblam_1^\top \int k(\bbX_1,\bbs;w) k(\bbX_2, \bbs;w) d\bbs dw \bblam_2 \\
& \leq \frac{2 \xi m(\calX,\calW)}{N_1 N_2} \bblam_1^\top \bbJ \bblam_2
\end{aligned}
\end{equation}
The last inequality stems from \eqref{eq_assum} and the fact that a value of a kernel is at most 1. This result can be extended to multiple agents by considering all pairs of $\bbQ_{ij}$ in the difference between the global dual function and the local dual functions.
Therefore we obtain:
\begin{equation}
\left| g(\bblam) - \sum_i g_i(\bblam_i) \right| \leq \frac{2 \xi m L}{N^2}
\end{equation}
for $L = (N^2/(N_i N_j))\bblam_i^\top \bbJ \bblam_j$.
\end{proof}

\section{Proof of Theorem \ref{theorem:separate}}
\label{A:theorem_sep}
\begin{proof}
Recall the relationship between the dual functions of the two problems \eqref{eqn_dual_distinct}. The relationship between the dual functions optimal values can be obtained through triangle inequality
\begin{equation}\label{eqn_distinct_optimal}
\vert g(\bblam^*) - \sum_i g_i(\bblam_i^*) \vert \leq \frac{4 \xi m L}{N^2}, 
\end{equation}
The dual function is strongly concave near the optimal value such that we can establish the relationship between the dual optimal variables
\begin{equation}
    \left\Vert \bblam^* - \bblam_a\right\Vert^2 \leq \frac{2}{\mu}\left\vert g(\bblam^*) - g(\bblam_a^*)\right\vert - \frac{2}{\mu} \nabla g(\bblam^*)(\bblam^* - \bblam_a^*),
\end{equation}
where $\bblam_a^* = [(N_1/N)\bblam_1^\top, \dots, (N_K/N)\bblam_K^*]$.
The gradient at the optimal value $\nabla g(\bblam^*) = 0$ or $\bblam^*= 0$. Therefore, the equation can be reduced to
\begin{equation}
    \left\Vert \bblam^* - \bblam_a\right\Vert^2 \leq \frac{2}{\mu}\left\vert g(\bblam^*) - g(\bblam_a^*)\right\vert \leq \frac{8 \xi m L}{\mu N^2}, 
\end{equation}
The optimal primal value can be obtained from the dual value and therefore we can establish the following inequality
\begin{equation}
\begin{aligned}
    &\left\vert \alpha^*(\bbs,w) - \sum_i \alpha_i^*(\bbs,w) \right\vert =  \\
    &\left\vert \frac{1}{N}\sum_{n=1}^N \bblam^*_n y_n k(\bbx_n,s;w) 
    - \sum_i \frac{1}{N_i} \sum_{\bbx_n \in \bbX_i} \bblam_{a,n}^* y_n k(\bbx,s;w) \right\vert \\
    &\leq \sum_{n=1}^N\left\vert y_n k(\bbx,\bbs;w) \left(\frac{1}{N} \bblam_n^* - \frac{1}{N_i} \bblam_{a,n}^*\right) \right\vert \\
    &\leq \sum_{n=1}^N\left\vert \frac{1}{N} \bblam_n^* - \frac{1}{N_i} \bblam_{a,n}^* \right\vert \\
    &\leq \frac{1}{\sqrt{N}} \Vert \bblam^* - \bblam_a^*\Vert \\
    & \leq \frac{2\sqrt{2 \xi m L}}{N \sqrt{\mu N}}
\end{aligned}
\end{equation}
where $\alpha^*(\bbs,w)$ represents the optimal variable learned by the centralized learner \eqref{eqn_the_central_problem} and $\alpha_i^*(\bbs,w)$ represents the optimal variable learned by the agent \eqref{eqn_the_agent_problem}.
\end{proof}

\section{proof of theorem \ref{th_overlap}}
\label{A:th_overlap}
\begin{proof}
In order to prove the theorem we first formulate the Lagrangian.
\begin{equation}\label{ERM}
\begin{aligned}
\calL_i(\alpha,\bblam) = \rho(\alpha) + \frac{1}{N_i} \sum_{x\in \calX_i} \lambda(x)\left[ \ell(f(x), y) - \epsilon(x) \right]
\end{aligned}
\end{equation}
Similarly, we can construct a function $\calL(f,\lambda)$ which is not associated with any primal function
\begin{equation}\label{SLM}
\calL(\alpha,\bblam) = \rho(\alpha) +  \int_{x\in \calX} \lambda(x)\left[ \ell(f(x), y) - \epsilon(x) \right] p(x) dx
\end{equation}
Notice that the integral is precisely the expected value:
  \[\mathbb{E}_x\left(\lambda(x)\left[ \ell(f(x), y) - \epsilon(x) \right] \right)\]
Therefore the minimization of \eqref{ERM} can be viewed as an empirical risk minimization problem which approximates the statistical loss minimization problem in \eqref{SLM}.
From \cite{lugosi2004bayes} and \cite{cortes2009new} it follows that:
\begin{equation}\label{SLM-ERM}
 \vert \calL(\alpha,\lambda) - \calL_i(\alpha,\bblam) \vert \leq \frac{M}{\sqrt{N_i}}
\end{equation}
where $ M$ is a constant, such that  $\Vert f \Vert^2 \leq M$ and $N_i$ is the sample size. We can construct the dual function and a function based on $\calL(\alpha, \lambda)$
\begin{equation}\label{dual}
g_i(\bblam) = \min_\alpha \calL(\alpha,\bblam)
\end{equation}
\begin{equation}\label{ERM_dual}
g(\lambda) = \min_\alpha \calL(\alpha, \lambda)
\end{equation}
For which we can compute the optimal $\lambda$
\begin{equation}\label{optimal_dual}
\bblam_i^* = \argmax_{\bblam \geq 0} g_i(\bblam)
\end{equation}
\begin{equation}
\lambda^* = \argmax_{\lambda \geq 0} g(\lambda).
\end{equation}
Since the difference between $\calL_i(\alpha,\bblam)$ and $\calL(\alpha,\lambda)$ is bounded, so is there minimums,
\begin{equation}
\vert g(\lambda) - g_i(\bblam) \vert \leq \frac{M}{\sqrt{N_i}}
\end{equation}
Because the inequality holds for any $\lambda$, it must hold for the optimal values.
Let $\lambda_s^*$ be the optimal function $\lambda(\bbx)$ evaluated at the sample points and $\bblam_i^*$ be the optimal dual variable of \eqref{eqn_the_agent_problem}, then by the triangle inequality it follows that:
\begin{equation}
\vert g(\lambda_s^*) - g_i(\bblam_i^*) \vert \leq \frac{2M}{\sqrt{N_i}}
\end{equation}
Furthermore, the dual function is strongly convex near the optimal value:
\begin{equation}
g_i(\bblam_i^*) - g_i(\bblam_s^*) \geq \frac{\mu}{2} \Vert\bblam_i^*- \bblam_s^*\Vert^2 + \nabla g_i(\bblam_i^*) (\bblam_i^* - \bblam_s^*),
\end{equation} 
where $\bblam_s^*$ is a vector for which $\bblam_{s,n}^* = \lambda_s^*(\bbx_n)$.
Notice that the term $\nabla g_i(\bblam_i^*) (\bblam_i^* - \lambda_s^*) \geq 0$. Most of the terms of the gradient are zero since $\bblam_ i$ maximizes the dual function. For the other terms, $\nabla g(\bblam_i^*)_n < 0$ only if $\bblam_{i,n}^* = 0$. In the latter case $(\bblam_{i,n}^* - \lambda_{s,n}^*) \leq 0$. Then we can conclude
\begin{equation}
\Vert\bblam_i^*- \lambda_s^*\Vert^2 \leq \frac{4M}{\mu\sqrt{N_i}}. 
\end{equation}
Next a bound on the functions $\alpha$ can be established which are defined as
\begin{equation}
\alpha_i^*(s,w)  =
\begin{cases} \frac{1}{N_i} \sum_j \lambda_j^* y_j k(\bbx_j,s;w), & \vert \alpha_i(s,w) \vert > \sqrt{2\gamma}\\
0  & otherwise. \\
\end{cases}
\end{equation}
Similarly, a function $\alpha(s,w) = \underset{\alpha \in L_2}{\text{argmin}} \calL(\alpha, \lambda^*)$ can be  be computes as:
\begin{equation}
\alpha(s,w) = \begin{cases}
\int \lambda(x) y(x) k(x,s;w) p(x)\, dx, &  \alpha^2(s,w)  > 2\gamma\\
0 & otherwise.
\end{cases}
\end{equation}
The difference between $\alpha$ and $\alpha_i$ is bounded as follows
\begin{equation}\label{eq_alpha_dif}
\begin{aligned}
&\vert \alpha(s,w) - \alpha_i(s,w) \vert = \\ 
& \left| \int \lambda(x) y_x k(x,s;w) p(x)\, dx - \frac{1}{N_i} \sum_j \lambda_j^* y_j k(\bbx_j,s;w) \right| \leq \\
& \frac{1}{N_i} \left| \sum_j (\bblam_{i,j}^* - \lambda_{s,j})y_j k(\bbx_j,s;w)   \right| + \\
&\left| \int \lambda(x) y_x k(x,s;w) p(x)\, dx - \frac{1}{N_i} \sum_j \lambda_{s,j}^* y_j k(\bbx_j,s;w) \right| \leq\\
& \frac{1}{N_i} \sum_j \vert \bblam_{i,j}^* - \lambda_{s,j} \vert + \frac{c \rho}{\sigma^3 \sqrt{N_i}}\\
& \leq \frac{1}{\sqrt{N_i}} \Vert \bblam_i^* - \lambda_s^* \Vert_2 + \frac{c \rho}{\sigma^3 \sqrt{N_i}} \leq \frac{2\sqrt{M}}{\sqrt{N_i^{1.5} \mu}} + \frac{c \rho}{\sigma^3 \sqrt{N_i}},
\end{aligned} 
\end{equation}
for which $c>0$ is a constant, $\rho = \mathbb{E}_\bbx\left[ \left| \lambda(\bbx) y_\bbx k(\bbx,s;w) \right|^3\right]$ and $\sigma^2 = \mathbb{E}_\bbx\left[ \left| \lambda(\bbx) y_\bbx k(\bbx,s;w) \right|^2\right]$
Given two models trained on independently drawn data sets, with optimal variables $\alpha_1$ and $\alpha_2$ respectively, then the absolute difference between the two variables is

\begin{equation}
\begin{aligned}
&\vert \alpha_1(s,w) - \alpha_2(s,w) \vert \\
&\leq \frac{2 \sqrt{M}}{\sqrt{N_1^{1.5} \mu_1}} + \frac{c \rho}{\sigma^3 \sqrt{N_1}} + \frac{2\sqrt{M}}{\sqrt{N_2^{1.5} \mu_2}} + \frac{c \rho}{\sigma^3 \sqrt{N_2}} \\
&\leq 2 \left(2\sqrt{\frac{M}{\mu N^{1.5}}} + \frac{c \rho}{\sigma^3 \sqrt{N}} \right),
\end{aligned}
\end{equation}
where $N = min(N_i, N_j)$.
\end{proof}

\bibliographystyle{IEEEbib}
\bibliography{kernelBounds}

\end{document}